\begin{document}
\begin{CJK*}{UTF8}{gbsn}
\begin{frontmatter}
	\title{A thermodynamically consistent and conservative diffuse-interface model for gas-liquid-solid multiphase flows}
	\author[a,b,c]{Chengjie Zhan}
	\author[a,b,c]{Xi Liu}
	\author[a,b,c]{Zhenhua Chai \corref{cor1}}
	\ead{hustczh@hust.edu.cn}
	\author[a,b,c]{Baochang Shi}
	\address[a]{School of Mathematics and Statistics, Huazhong University of Science and Technology, Wuhan 430074, China}
	\address[b]{Institute of Interdisciplinary Research for Mathematics and Applied Science, Huazhong University of Science and Technology, Wuhan 430074, China}
	\address[c]{Hubei Key Laboratory of Engineering Modeling and Scientific Computing, Huazhong University of Science and Technology, Wuhan 430074, China}
	\cortext[cor1]{Corresponding author.}
	\begin{abstract} 
		In this work, a thermodynamically consistent and conservative diffuse-interface model for gas-liquid-solid multiphase flows is proposed. In this model, a novel free energy for the gas-liquid-solid multiphase flows is established according to a ternary phase-field model, and it not only contains the standard bulk and interface free energies for two-phase flows, but also includes some additional terms to reflect the penalty in the solid phase and the wettability on the solid surface. Furthermore, a smooth indicator function of the solid phase is also introduced in the consistent Navier-Stokes equations to achieve a high viscosity in the solid phase and preserve the velocity boundary conditions on the solid surface. Based on the proposed diffuse-interface model, the fluid interface dynamics, the fluid-structure interaction, and the wetting property of the solid surface can be described simply and efficiently. Additionally, the total energy is also proved to be dissipative for the two-phase flows in the stationary geometries. To test the present diffuse-interface model, we develop a consistent and conservative lattice Boltzmann method and conduct some simulations. The numerical results also confirm the energy dissipation and good capability of the proposed diffuse-interface model in the study of two-phase flows in complex geometries and gas-liquid-particle multiphase flows.    
	\end{abstract}
	\begin{keyword}
		Phase-field model \sep energy dissipation \sep two-phase flows in complex geometries \sep gas-liquid-particle flows
	\end{keyword}	
\end{frontmatter}		
\end{CJK*}
\section{Introduction}
Gas-liquid-solid multiphase flows are ubiquitous in nature and many industrial processes, such as the rain drops impacting on leaves and the ships ploughing through sea waves \cite{Lloyd1989}. 
With the development of computer science and scientific computing, the numerical simulation has become an effective tool to reveal the principles of such multiphase flows and to provide guidance on industrial manufacturing processes. 

For the complex gas-liquid-solid multiphase flows, there are several common challenges in numerical simulations, i.e., the fluid interfacial dynamics, the fluid-structure interaction (FSI) with moving boundaries, and the contact line dynamics.
To track or capture the fluid interface in multiphase flows, there are two categories of methods: the sharp-interface method \cite{Unverdi1992JCP,Sussman2007JCP} and the diffuse-interface method \cite{Anderson1998ARFM,Jacqmin1999JCP}. As a diffuse-interface approach, the phase-field model (PFM) \cite{Jacqmin1999JCP,Badalassi2003JCP} has been widely used in the study of the fluid interfacial dynamics in multiphase flows through the fourth-order Cahn-Hilliard (CH) equation \cite{Cahn1996EJAM} or the conservative second-order Allen-Cahn (AC) equation \cite{Chiu2011JCP} due to the conservation of mass and the no need of explicit interface-tracking. 
Furthermore, the FSI usually plays an important role in the multiphase flows with moving solid objects, such as particle sedimentation, dynamics of fluttering and tumbling plates \cite{Andersen2005JFM}. To accurately resolve the FSI problems, a popular numerical tool is the immersed boundary method (IBM) \cite{Peskin1977JCP,Fogelson1988JCP}, in which the solid boundary is marked by a set of Lagrangian points, and the FSI between the immersed boundary and the neighboring fluid is depicted by introducing an extra force term with the Dirac delta function. Recently, some other effective methods have also been proposed for the FSI problems, for instance, the partially saturated method \cite{Noble1998IJMPC,Cook2004EC}, the volume penalization method \cite{Angot1999NM,Kolomenskiy2009JCP}, and the smoothed profile method (SPM) \cite{Nakayama2005PRE,Nakayama2008EPJE}. In the SPM, a smooth indicator function is introduced to represent the local volume fraction of the solid phase, and the extra force included in the Navier-Stokes (NS) equations is similar to that in the IBM.  
Additionally, the wettability of the solid surface is also needed to be considered in the gas-liquid-solid multiphase flows, especially for the case with a small capillary number. To describe the wetting property on solid surface, a wetting boundary condition was proposed from the viewpoint of the geometrical relation \cite{Ding2007PRE}, and it has been widely used in the past years \cite{Sui2014ARFM,Li2016PECS}. On the other hand, in the PFM, the wetting boundary condition can be constructed through introducing an additional wall free energy functional \cite{Gennes1985EMP,Moldover1980}, which also shows a good performance \cite{Huang2015IJNMF,Liang2019PRE}. However, when the two-phase flows in complex geometries are considered, the discretization of the wetting boundary condition is more complicated. To avoid the direct treatment of wetting boundary condition at complex boundaries, some diffuse domain approaches \cite{Aland2010CMES,Guo2021JFM,Yang2023JCP} have been developed in recent years. In these methods, as shown in Fig. \ref{fig-domain}, the boundary conditions at the boundary $\Gamma$ are treated as the source terms in the governing equations on a larger and regular computational domain $\Omega$. However, the reconstruction of the governing equations or numerical schemes is also an important issue, and some additional treatments are needed. More recently, Hong et al. \cite{Hong2023JCP} proposed a phase-field approach to investigate the two-phase flows in complex domains with moving contact lines. In their model, the original boundary integral of wall free energy density for the wettability at the fluid-solid boundary is rewritten as a volume integral in the larger rectangular domain by considering an approximation of the surface delta function, and an additional penalty term is also introduced into the free energy to preserve the mass conservation of the AC equation. However, in this model, the value of the penalty parameter needs to be determined through some numerical tests, and the velocity boundary conditions at the fluid-solid boundary are not preserved in the rectangular domain.

To address the above issues for complex gas-liquid-solid flows and consider the FSI and the wettability simultaneously, some diffuse-interface models have been proposed. Shinto \cite{Shinto2012APT} presented a phase-field free energy functional for fluid domain and colloid domain, and an additional term is introduced to describe the wetting surface of colloids. In this model, the positive parameter in the additional term must be chosen to be a large value such that the free energy density in colloids exhibits a single-well profile, and the minimum value of the single-well function can control the wettability of the colloid surface. Based on the similar idea, Lecrivain et al. \cite{Lecrivain2016PF,Lecrivain2017PRE,Lecrivain2018PRF,Lecrivain2020JCP} developed some phase-field models to simulate the attachment of particles, the arbitrarily shaped particle at the fluid interface, and the elasto-capillary deformation of a flexible fiber. Additionally, Mino et al. \cite{Mino2020PRE,Mino2022PRE} coupled the free energy lattice Boltzmann (LB) method with the improved SPM, and the minimum value of the single-well profile controlling the wettability can be given by a theoretical estimation \cite{Iwahara2003}. More recently, Zheng et al. \cite{Zheng2023PRE} developed a phase-field based LB equation to simulate gas-liquid-particle fluid dynamics together with the SPM. In their model, the SPM is applied to both the conservative AC equation and the incompressible NS equations, and the contact angle is also determined by an affinity parameter through a cubic polynomial \cite{Iwahara2003}. However, to preserve the mass conservation of the modified AC equation, a remedial strategy should be adopted. 

In this paper, we will propose a unified diffuse-interface model for the gas-liquid-solid multiphase flows, where the fluid interfacial dynamics, the FSI, and the wettability of solid surface can be described simply and efficiently. Based on a classical ternary PFM, we develop a novel free energy functional for the gas-liquid-solid multiphase flows, and the free energy functional can be divided into three parts: the first one is the standard form for two-phase flows, the second one is a penalty for the phase-field variable in solid phase, and the third one can reflect the wetting property of solid surface. It is worth noting that the contact angle is explicitly included in the third part of the free energy, instead of using a user-defined control parameter in the previous models \cite{Shinto2012APT,Lecrivain2020JCP,Zheng2023PRE}, and we can show that under some proper approximations, the third part is also consistent with the previous wall energy for two-phase flows \cite{Jacqmin2000JFM,Yue2010JFM}. Through minimizing the present free energy functional, one can obtain the conservative fourth-order CH equation for interface capturing. In addition, the FSI between fluid and free solid object can be depicted by a simplified form of the diffuse-interface method \cite{Liu2022CF}, and it has the similar form with the IBM and SPM at the leading order.

The rest of this paper is organized as follows. In Section \ref{models}, according to a ternary free energy functional, a thermodynamically consistent and conservative PFM for gas-liquid-solid multiphase flows is first proposed, followed by the consistent and conservative NS equations for fluid flows. Then the proposed phase-field-NS system is proved to possess an energy dissipation law for two-phase flows in the stationary geometries, and for the multiphase flows with free rigid particles, the governing equations for particle motion are also introduced. The numerical methods are developed in Section \ref{methods}, where the LB method is the primary numerical tool for partial differential equations due to its features of simplicity in coding and fully parallel algorithm \cite{Higuera1989EL,Chen1998ARFM,Aidun2010ARFM}. In Section \ref{Simulations}, several typical benchmark problems are adopted to test the proposed diffuse-interface model, and the total energy is also numerically confirmed to be dissipative for two-phase flows in stationary geometries. Furthermore, the present diffuse-interface model is applied to study some complex gas-liquid-solid multiphase flows. Finally, some conclusions are summarized in Section \ref{Conclusions}.

\begin{figure}
	\centering
	\includegraphics[width=2.0in]{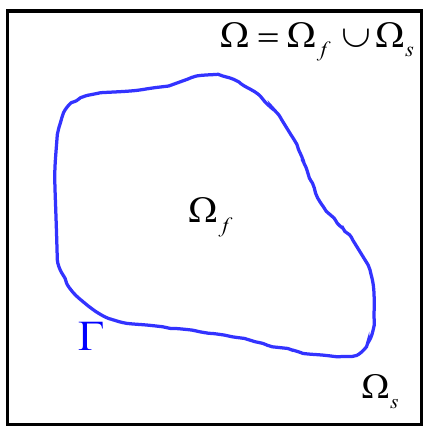}
	\caption{The computational domain $\Omega$ is the union of the fluid domain $\Omega_f$ and the solid domain $\Omega_s$, and the interface between these two domains is $\Gamma$.}
	\label{fig-domain}
\end{figure}

\section{Mathematical model for gas-liquid-solid multiphase flows}\label{models}
\subsection{Phase-field model for interface capturing}
In the phase-field theory for two-phase flows, the free energy density can be given by \cite{Jacqmin1999JCP}
\begin{equation}\label{eq-2energyDensity}
	f_1\left(\phi,\nabla\phi\right)=\frac{3\sigma_{12}}{4D}\left(\phi-1\right)^2\left(\phi+1\right)^2+\frac{3D\sigma_{12}}{16}|\nabla\phi|^2,
\end{equation}
where the order parameter $\phi$ is smoothly changed from 1 in phase 1 to -1 in phase 2, $D$ is the interface thickness, and $\sigma_{12}$ is the surface tension coefficient between phase 1 and phase 2.
Through minimizing the free energy functional, one can obtain the following CH equation,
\begin{equation}\label{eq-CHE}
	\frac{\partial\phi}{\partial t}+\nabla\cdot\left(\phi\mathbf{u}\right)=\nabla\cdot M\nabla\mu_{\phi},
\end{equation}
where $M$ is the mobility, $\mathbf{u}$ is the velocity of fluid, $\mu_{\phi}$ is the chemical potential and can be obtained by applying the variational operator to the free energy functional. 

Furthermore, to extend the PFM to depict multiphase systems, Boyer and Lapuerta \cite{Boyer2006MMNA} elaborately designed the following mix free energy for three-component flows, 
\begin{equation}\label{eq-3energy}
	E=\int_{\Omega}\left[F\left(\phi_0,\phi_1,\phi_2\right)+\frac{3D}{8}\sum_{k=0}^{2}\gamma_k|\nabla\phi_k|^2\right]\mathrm{d}\Omega,
\end{equation}
where $\phi_k\in[0,1]$ represents the volume fraction of phase $k$ and satisfies $\phi_0+\phi_1+\phi_2=1$. The spreading coefficient $\gamma_k$ is related to the pairwise symmetric surface tension coefficient $\sigma_{kl}$,
\begin{equation}
	\begin{aligned}
		\gamma_0&=\sigma_{01}+\sigma_{02}-\sigma_{12},\\
		\gamma_1&=\sigma_{01}+\sigma_{12}-\sigma_{02},\\
		\gamma_2&=\sigma_{02}+\sigma_{12}-\sigma_{01},
	\end{aligned}
\end{equation}
and the bulk free energy density is given by \cite{Boyer2006MMNA}
\begin{equation}
	F\left(\phi_0,\phi_1,\phi_2\right)=\frac{12}{D}\left[\frac{\gamma_0}{2}\phi_0^2\left(1-\phi_0\right)^2+\frac{\gamma_1}{2}\phi_1^2\left(1-\phi_1\right)^2+\frac{\gamma_2}{2}\phi_2^2\left(1-\phi_2\right)^2\right].
\end{equation}

In this work, we introduce a phase-field variable $\phi=\phi_1-\phi_2$, and then obtain the formulas $\phi_1=\left(1-\phi_0+\phi\right)/2$ and $\phi_2=\left(1-\phi_0-\phi\right)/2$. Based on these two relations, the three-component mix free energy (\ref{eq-3energy}) can be expressed by $\phi_0$, $\phi$, and their gradients,
\begin{equation}
	E=E_0+E_1+E_2,\quad E_1=\int_{\Omega}f_1\left(\phi,\nabla\phi\right)\mathrm{d}\Omega,\quad E_2=\int_{\Omega}f_2\left(\phi_0,\phi,\nabla\phi_0,\nabla\phi\right)\mathrm{d}\Omega,
\end{equation}
where $E_0$ is only the function of $\phi_0$, $E_1$ is related to $\phi$ and exactly is the integral of the free energy density (\ref{eq-2energyDensity}), while $f_2\left(\phi_0,\phi,\nabla\phi_0,\nabla\phi\right)$ contains the cross terms, and is given by
\begin{equation}
	f_2\left(\phi_0,\phi,\nabla\phi_0,\nabla\phi\right)=\frac{9\sigma_{12}}{2D}\phi_0^2\phi^2+\frac{3}{D}\left(\sigma_{02}-\sigma_{01}\right)\phi_0\phi\left(\phi^2+\phi_0^2-1\right)+\frac{3D}{8}\left(\sigma_{02}-\sigma_{01}\right)\nabla\phi_0\cdot\nabla\phi.
\end{equation}

According to the above reformulation of the three-component free energy, we can set $\phi_0$ to be the smooth indicator function of the solid phase when considering the gas-liquid-solid multiphase flows, and develop a novel free energy for the total system where the fluid and solid domains are included (see the diagram in Fig. \ref{fig-domain}),
\begin{equation}\label{eq-2energy}
	\begin{aligned}
		\mathcal{E}_F&=\int_{\Omega}\left[f_1\left(\phi,\nabla\phi\right)+f_2\left(\phi_0,\phi,\nabla\phi_0,\nabla\phi\right)\right]\mathrm{d}\Omega\\
		&=\int_{\Omega}\left[\frac{3\sigma_{12}}{4D}\left(1-\phi\right)^2\left(1+\phi\right)^2+\frac{3D\sigma_{12}}{16}|\nabla\phi|^2+\frac{9\sigma_{12}}{2D}\phi_0^2\phi^2+\frac{3\sigma_{12}\cos\theta}{D}\phi_0\phi\left(\phi^2+\phi_0^2-1\right)+\frac{3D\sigma_{12}\cos\theta}{8}\nabla\phi_0\cdot\nabla\phi\right]\mathrm{d}\Omega,
	\end{aligned}
\end{equation}
where the Young's law \cite{Young1805} is adopted, i.e., $\sigma_{02}-\sigma_{01}=\sigma_{12}\cos\theta$ with $\theta$ being the contact angle of phase 1 to the solid phase, as shown in Fig. \ref{fig-laplace}. 
Here we note that the proposed free energy can reduce to the standard one for binary flows when solid phase $\phi_0$ disappears. Actually, the free energy density in Eq. (\ref{eq-2energy}) can be divided into three parts: the first two terms are the original bulk and interface parts for two-phase flows, the third term can be seen as the penalty of $\phi=0$ when $\phi_0=1$, and the last two terms are related to the wettability on the solid surface. 

Then the chemical potential can be given by the following variational
form,
\begin{equation}\label{eq-potential}
	\mu_{\phi}=\frac{\delta \mathcal{E}_F}{\delta\phi}=\frac{3\sigma_{12}}{D}\phi\left(\phi-1\right)\left(\phi+1\right)-\frac{3D\sigma_{12}}{8}\nabla^2\phi+\frac{9\sigma_{12}}{D}\phi_0^2\phi+\frac{3\sigma_{12}\cos\theta}{D}\phi_0\left(3\phi^2+\phi_0^2-1\right)-\frac{3D\sigma_{12}\cos\theta}{8}\nabla^2\phi_0.
\end{equation}
With the above modified chemical potential $\mu_{\phi}$, the CH equation (\ref{eq-CHE}) can be used to describe the gas-liquid-solid multiphase flows with different wetting properties.

\begin{figure}
	\centering
	\includegraphics[width=2.4in]{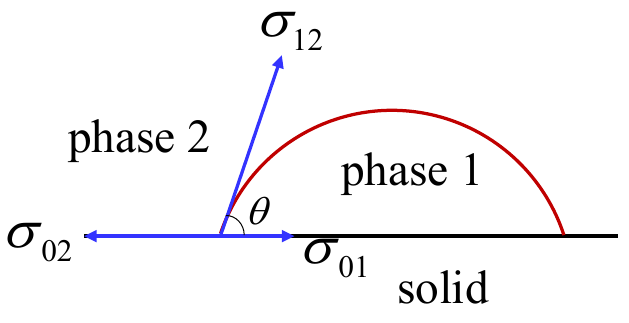}
	\caption{Three-phase contact diagram.}
	\label{fig-laplace}
\end{figure}

In addition, to make a clear explanation of the energy density terms in Eq. (\ref{eq-2energy}), we first define a new bulk free energy density,
\begin{equation}\label{eq-energybulk}
	f_b\left(\phi,\phi_0\right)=\frac{3\sigma_{12}}{4D}\left(1-\phi\right)^2\left(1+\phi\right)^2+\frac{9\sigma_{12}}{2D}\phi_0^2\phi^2,
\end{equation}
and plot it in Fig. \ref{fig-energy}. From this figure, one can see that when $\phi_0=0$ in fluid domain, the bulk free energy density $f_b$ reduces to the original double-well function with two minima located at $\phi=\pm1$, while it turns into a single-well profile with the minimum located at $\phi=0$ in solid phase when $\phi_0=1$. These results indicate that the last term in Eq. (\ref{eq-energybulk}) can be considered as a penalty.

Furthermore, we also present an analysis on the part of free energy related to the wettability. We now define the following free energy,  
\begin{equation}\label{eq-energywall}
	\mathcal{E}_w=\int_{\Omega}\left[\frac{3\sigma_{12}\cos\theta}{D}\phi_0\phi\left(\phi^2+\phi_0^2-1\right)+\frac{3D\sigma_{12}\cos\theta}{8}\nabla\phi_0\cdot\nabla\phi\right]\mathrm{d}\Omega,
\end{equation}
when the variational operator is used to the above free energy (\ref{eq-energywall}), one can obtain
\begin{equation}\label{eq-DEi}
	\delta\mathcal{E}_w=\int_{\Omega}\left[\frac{3\sigma_{12}\cos\theta}{D}\phi_0\left(3\phi^2+\phi_0^2-1\right)-\frac{3D\sigma_{12}\cos\theta}{8}\nabla^2\phi_0\right]\delta\phi\mathrm{d}\Omega,
\end{equation}
where the condition $\mathbf{n}_{\Omega}\cdot\nabla\phi_0=0$ is used. Additionally, we can suppose the indicator function of solid phase $\phi_0$ to be always at the equilibrium state and satisfy a hyperbolic tangent profile $\phi_0=\left[1+\tanh\left(2\zeta/D\right)\right]/2$, where $\zeta$ is the signed distance function, then we can get the following relations,
\begin{equation}\label{eq-phi0Eq}
	\nabla^2\phi_0=\frac{16}{D^2}\phi_0\left(1-\phi_0\right)\left(1-2\phi_0\right),\quad |\nabla\phi_0|=\frac{4\phi_0\left(1-\phi_0\right)}{D}.
\end{equation} 
Substituting Eq. (\ref{eq-phi0Eq}) into Eq. (\ref{eq-DEi}) yields
\begin{equation}
	\delta\mathcal{E}_w=\int_{\Omega}\frac{3\sigma_{12}\cos\theta}{4}|\nabla\phi_0|\left[\frac{3\phi^2}{1-\phi_0}-3\left(1-\phi_0\right)\right]\delta\phi\mathrm{d}\Omega.
\end{equation}

\begin{figure}
	\centering
	\includegraphics[width=3.5in]{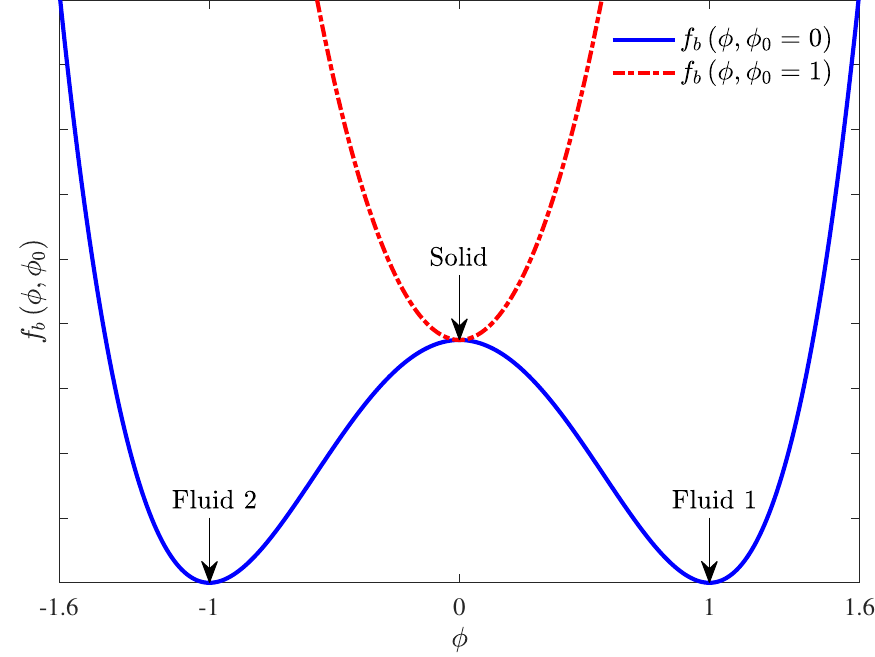}
	\caption{The bulk free energy density.}
	\label{fig-energy}
\end{figure}

Usually, the surface delta function of the fluid-solid interface can be approximated by $|\nabla\phi_0|$ in a diffuse-interface method, and satisfies $\int_{\Omega}|\nabla\phi_0|h\mathrm{d}\Omega=\int_{\Gamma}h\mathrm{d}A$ for any smooth function $h$.
For this reason, the free energy $\mathcal{E}_w$ in $\Omega$ can be written as a boundary integral at the fluid-solid interface $\Gamma$,
\begin{equation}
	\mathcal{E}_w=3\int_{\Gamma}\psi_w\left(\phi,\phi_0\right)\mathrm{d}A,\quad \psi_w\left(\phi,\phi_0\right)=\frac{\sigma_{01}+\sigma_{02}}{2}-\frac{\sigma_{12}}{4}\left[3\phi\left(1-\phi_0\right)-\frac{\phi^3}{1-\phi_0}\right]\cos\theta.
\end{equation}
Here one can observe that when $\phi_0=0$, the above energy density $\psi_w$ is consistent with the standard wall energy function for wettability in two-phase flows \cite{Jacqmin2000JFM,Yue2010JFM}. In other words, the free energy (\ref{eq-energywall}) of present diffuse-interface model can accurately control the wettability on solid surface.  

\subsection{The incompressible Navier-Stokes equations for fluid flows}
To describe the fluid flows, the following consistent and conservative NS equations are applied,
\begin{subequations}\label{eq-NSE}
	\begin{equation}\label{eq-NS1}
		\nabla\cdot\mathbf{u}=0,
	\end{equation}
	\begin{equation}\label{eq-NS2}
		\frac{\partial\left(\rho\mathbf{u}\right)}{\partial t}+\nabla\cdot\left[\left(\rho\mathbf{u}-\mathbf{S}\right)\mathbf{u}\right]=-\nabla P+\nabla\cdot\frac{\mu}{1-\phi_0}\left[\nabla\mathbf{u}+\left(\nabla\mathbf{u}\right)^\top\right]+\mu_{\phi}\nabla\phi+\rho\mathbf{f}+\mathbf{F}_b,
	\end{equation}
\end{subequations}
where $P$ is the pressure, the dynamic viscosity $\mu$ is modified by including the term of $1-\phi_0$ to achieve a high viscosity in the solid phase with $\phi_0=1$ and a normal viscosity in fluid region when $\phi_0=0$. 
The fluid density $\rho$ and the viscosity $\mu$ are usually related to the order parameter by
\begin{equation}\label{eq-rho}
	\rho=\frac{\rho_1-\rho_2}{2}\phi+\frac{\rho_1+\rho_2}{2},
\end{equation}
\begin{equation}
	\mu=\frac{\mu_1-\mu_2}{2}\phi+\frac{\mu_1+\mu_2}{2},
\end{equation}
where $\rho_1$, $\mu_1$ as well as $\rho_2$ and $\mu_2$ are the material properties of the pure fluid 1 and fluid 2, respectively. The mass diffusion flux between two phases is given by $\mathbf{S}=M\nabla\mu_{\phi}\left(\rho_1-\rho_2\right)/2$ according to the phase-field equation (\ref{eq-CHE}), $\mu_{\phi}\nabla\phi$ is the surface tension force, $\rho\mathbf{f}$ is the force caused by the FSI, and $\mathbf{F}_b$ is other body force, such as the gravity. We would like to point out that with above definition of the consistent mass flux, the conservative NS equations (\ref{eq-NSE}) satisfy the consistency of reduction, the consistency of mass and momentum transport, and the consistency of mass conservation \cite{Abels2012MMMAS,Huang2020JCP,Mirjalili2021JCP}.

In addition, it should be noted that when the solid object is free moving at the fluid interface, such as gas-liquid-particle multiphase flows, an additional force $\mu_{\phi_0}\nabla\phi_0$ should be included with $\mu_{\phi_0}$ being given by
\begin{equation}\label{eq-muPhi0}
	\mu_{\phi_0}=\frac{\delta \mathcal{E}_F}{\delta\phi_0}=\frac{9\sigma_{12}}{D}\phi_0\phi^2+\frac{3\sigma_{12}\cos\theta}{D}\phi\left(3\phi_0^2+\phi^2-1\right)-\frac{3D\sigma_{12}\cos\theta}{8}\nabla^2\phi.
\end{equation} 
Here one can also find from Eq. (\ref{eq-muPhi0}) that for the particle flows in a single-phase homogeneous medium, i.e., $\phi=0$, above chemical potential $\mu_{\phi_0}$ becomes zero and thus the additional force $\mu_{\phi_0}\nabla\phi_0$ disappears. 

\subsection{The energy dissipation law}
Now we will show the second law of thermodynamics of the phase-field-NS system for the two-phase flows in the stationary geometries. The second law of thermodynamics states that the total energy of an isothermal multiphase system, including both the free energy and kinetic energy, is not increasing without any external energy input.

We denote the total energy of the system to be $\mathcal{E}_{total}=\mathcal{E}_F+\mathcal{E}_K$, where $\mathcal{E}_K=\int_{\Omega}e_K\mathrm{d}\Omega$ is the kinetic energy with energy density defined by $e_K=\rho\mathbf{u}^2/2$.

\newtheorem{theorem}{Theorem}
\begin{theorem}\label{Theorem}
	The phase-field-NS system [Eqs. (\ref{eq-CHE}), (\ref{eq-potential}) and (\ref{eq-NSE})] possesses the following energy dissipation law,
	\begin{equation}
		\frac{\mathrm{d}\mathcal{E}_{total}}{\mathrm{d}t}=-\int_{\Omega}\left\{M\nabla\mu_{\phi}\cdot\nabla\mu_{\phi}+\frac{\mu}{2\left(1-\phi_0\right)}\left[\nabla\mathbf{u}+\left(\nabla\mathbf{u}\right)^\top\right]:\left[\nabla\mathbf{u}+\left(\nabla\mathbf{u}\right)^\top\right]\right\}\mathrm{d}\Omega\leq0.
	\end{equation}
\end{theorem}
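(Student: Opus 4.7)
The plan is to split $\mathcal{E}_{total}=\mathcal{E}_F+\mathcal{E}_K$, differentiate each piece in time, and show that the only non-dissipative coupling between them---the work done by the surface-tension force $\mu_\phi\nabla\phi$---cancels between the two derivatives, leaving only the mobility- and viscosity-driven sinks that appear on the right-hand side. Throughout, I would impose the natural boundary conditions $\mathbf{n}_\Omega\cdot\nabla\phi=\mathbf{n}_\Omega\cdot\nabla\phi_0=\mathbf{n}_\Omega\cdot\nabla\mu_\phi=0$ and $\mathbf{u}\cdot\mathbf{n}_\Omega=0$ (together with no-slip or an equivalent condition that kills the viscous boundary term) so that every integration by parts produces a vanishing boundary contribution, and I would take $\rho\mathbf{f}=\mathbf{F}_b=0$ and $\partial_t\phi_0=0$, consistent with a stationary geometry and no external energy input.

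First, for the free energy, stationarity of $\phi_0$ together with $\mu_\phi=\delta\mathcal{E}_F/\delta\phi$ and the Neumann conditions on $\phi$ and $\phi_0$ yields $d\mathcal{E}_F/dt=\int_\Omega\mu_\phi\,\partial_t\phi\,d\Omega$. Substituting the CH equation~(\ref{eq-CHE}), integrating by parts twice, and using $\nabla\cdot\mathbf{u}=0$ produces
\begin{equation*}
\frac{d\mathcal{E}_F}{dt}=-\int_\Omega\mu_\phi\,\mathbf{u}\cdot\nabla\phi\,d\Omega-\int_\Omega M|\nabla\mu_\phi|^2\,d\Omega.
\end{equation*}

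Next, for the kinetic energy, the preparatory step is to extract a conservation law for $\rho$: from the linear relation~(\ref{eq-rho}), the CH equation, and the definition $\mathbf{S}=M\nabla\mu_\phi(\rho_1-\rho_2)/2$, one obtains $\partial_t\rho+\nabla\cdot(\rho\mathbf{u}-\mathbf{S})=0$. Using this, the left-hand side of~(\ref{eq-NS2}) reduces to $\rho\,\partial_t\mathbf{u}+(\rho\mathbf{u}-\mathbf{S})\cdot\nabla\mathbf{u}$; dotting with $\mathbf{u}$ and recasting the convective piece as a divergence plus a $(\partial_t\rho)|\mathbf{u}|^2/2$ term gives the pointwise identity
\begin{equation*}
\partial_t\!\left(\tfrac{1}{2}\rho|\mathbf{u}|^2\right)=\mathbf{u}\cdot\!\left\{\partial_t(\rho\mathbf{u})+\nabla\cdot[(\rho\mathbf{u}-\mathbf{S})\mathbf{u}]\right\}-\tfrac{1}{2}\nabla\cdot\bigl[(\rho\mathbf{u}-\mathbf{S})|\mathbf{u}|^2\bigr].
\end{equation*}
Integrating (the last divergence vanishes by $\mathbf{u}\cdot\mathbf{n}_\Omega=0$ and $\nabla\mu_\phi\cdot\mathbf{n}_\Omega=0$), substituting~(\ref{eq-NS2}), noting that the pressure term drops by $\nabla\cdot\mathbf{u}=0$, symmetrizing the viscous contribution to produce the stated $[\nabla\mathbf{u}+(\nabla\mathbf{u})^\top]\!:\![\nabla\mathbf{u}+(\nabla\mathbf{u})^\top]$ rate, and collecting the surface-tension work $+\int_\Omega\mu_\phi\,\mathbf{u}\cdot\nabla\phi\,d\Omega$ gives $d\mathcal{E}_K/dt$. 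Adding this to $d\mathcal{E}_F/dt$ cancels the surface-tension work and delivers the claimed identity, whose sign is manifest since $M\ge0$, $\mu\ge0$, and $1-\phi_0\ge0$.

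The hard part is the variable-density bookkeeping in $d\mathcal{E}_K/dt$: a direct differentiation of $\tfrac12\rho|\mathbf{u}|^2$ leaves an uncontrolled $\tfrac12(\partial_t\rho)|\mathbf{u}|^2$ remainder, and absorbing it cleanly is only possible because the mass flux $\mathbf{S}$ appearing in the conservative momentum equation~(\ref{eq-NS2}) is the \emph{same} flux that renders $\rho\mathbf{u}-\mathbf{S}$ divergence-consistent with $\partial_t\rho$. This is precisely the ``consistency of mass and momentum transport'' property invoked after Eq.~(\ref{eq-NSE}); without it the surface-tension cancellation is at best formal and the resulting dissipation identity fails to be sign-definite. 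Once the $\partial_t(\tfrac12\rho|\mathbf{u}|^2)$ identity is secured, the remaining steps---symmetrization of the strain-rate tensor and the surface-tension cancellation between free and kinetic energies---are routine integration-by-parts exercises.
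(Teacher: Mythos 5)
Your proposal is correct and follows essentially the same route as the paper's proof: dotting $\mathbf{u}$ into the momentum equation and eliminating the $\tfrac12|\mathbf{u}|^2\left[\partial_t\rho+\nabla\cdot\left(\rho\mathbf{u}-\mathbf{S}\right)\right]$ remainder via the consistent mass flux, pairing the Cahn--Hilliard equation with $\mu_\phi$ (equivalently writing $\mathrm{d}\mathcal{E}_F/\mathrm{d}t=\int_\Omega\mu_\phi\partial_t\phi\,\mathrm{d}\Omega$ under stationary $\phi_0$ and Neumann conditions), and cancelling the surface-tension work $\int_\Omega\mu_\phi\mathbf{u}\cdot\nabla\phi\,\mathrm{d}\Omega$ between the two rates. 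Your explicit derivation of $\partial_t\rho+\nabla\cdot\left(\rho\mathbf{u}-\mathbf{S}\right)=0$ from Eq. (\ref{eq-rho}) and the definition of $\mathbf{S}$ simply spells out what the paper invokes as ``mass conservation from the CH equation,'' so no substantive difference remains.
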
 
\begin{proof}
	Performing the dot product between $\mathbf{u}$ and the momentum equation (\ref{eq-NS2}), one can obtain the following expression, 
	\begin{equation}
		\begin{aligned}
			&\mathbf{u}\cdot\left\{\frac{\partial\left(\rho\mathbf{u}\right)}{\partial t}+\nabla\cdot\left[\left(\rho\mathbf{u}-\mathbf{S}\right)\mathbf{u}\right]\right\}=\frac{\partial e_k}{\partial t}+\nabla\cdot\left[\left(\rho\mathbf{u}-\mathbf{S}\right)\frac{\mathbf{u}\cdot\mathbf{u}}{2}\right]+\frac{\mathbf{u}\cdot\mathbf{u}}{2}\left[\frac{\partial\rho}{\partial t}+\nabla\cdot\left(\rho\mathbf{u}-\mathbf{S}\right)\right]\\
			&=-\nabla\cdot\left(P\mathbf{u}\right)+\nabla\cdot\left\{\frac{\mu}{1-\phi_0}\left[\nabla\mathbf{u}+\left(\nabla\mathbf{u}\right)^\top\right]\cdot\mathbf{u}\right\}-\frac{\mu}{2\left(1-\phi_0\right)}\left[\nabla\mathbf{u}+\left(\nabla\mathbf{u}\right)^\top\right]:\left[\nabla\mathbf{u}+\left(\nabla\mathbf{u}\right)^\top\right]+\mu_{\phi}\nabla\phi\cdot\mathbf{u}+\left(\rho\mathbf{f}+\mathbf{F}_b\right)\cdot\mathbf{u}.
		\end{aligned}
	\end{equation}
	Here the last term on the right-hand side of the first line in the above equation is zero according to the mass conservation from CH equation. Therefore, the variation of the kinetic energy can be given by
	\begin{equation}\label{eq-DtEk}
		\begin{aligned}
			\frac{\mathrm{d}\mathcal{E}_K}{\mathrm{d}t}=\int_{\Omega}\frac{\partial e_k}{\partial t}\mathrm{d}\Omega=&\int_{\partial\Omega}\left\{\left(\rho\mathbf{u}-\mathbf{S}\right)\frac{\mathbf{u}\cdot\mathbf{u}}{2}-P\mathbf{u}+\frac{\mu}{1-\phi_0}\left[\nabla\mathbf{u}+\left(\nabla\mathbf{u}\right)^\top\right]\cdot\mathbf{u}\right\}\cdot\mathrm{d}\mathbf{s}+\int_{\Omega}\left(\rho\mathbf{f}+\mathbf{F}_b\right)\cdot\mathbf{u}\mathrm{d}\Omega\\
			&+\int_{\Omega}\mu_{\phi}\nabla\phi\cdot\mathbf{u}\mathrm{d}\Omega-\int_{\Omega}\frac{\mu}{2\left(1-\phi_0\right)}\left[\nabla\mathbf{u}+\left(\nabla\mathbf{u}\right)^\top\right]:\left[\nabla\mathbf{u}+\left(\nabla\mathbf{u}\right)^\top\right]\mathrm{d}\Omega.
		\end{aligned}
	\end{equation}
	
	Similarly, the following relations can be derived by multiplying $\mu_{\phi}$ and $\partial_t\phi$ to the CH equation (\ref{eq-CHE}) and the chemical potential (\ref{eq-potential}), respectively, 
	\begin{subequations}
		\begin{equation}
			\mu_{\phi}\partial_t\phi=-\mu_{\phi}\mathbf{u}\cdot\nabla\phi-M\nabla\mu_{\phi}\cdot\nabla\mu_{\phi}+\nabla\cdot\left(M\mu_{\phi}\nabla\mu_{\phi}\right),
		\end{equation}
		\begin{equation}
			\begin{aligned}
				\mu_{\phi}\partial_t\phi=&\partial_t\phi\left[\frac{3\sigma_{12}}{D}\phi\left(\phi-1\right)\left(\phi+1\right)+\frac{9\sigma_{12}}{D}\phi_0^2\phi+\frac{3\sigma_{12}\cos\theta}{D}\phi_0\left(3\phi^2+\phi_0^2-1\right)\right]\\
				&-\frac{3D\sigma_{12}}{8}\nabla\cdot\left[\partial_t\phi\left(\nabla\phi+\cos\theta\nabla\phi_0\right)\right]+\frac{3D\sigma_{12}}{8}\nabla\left(\partial_t\phi\right)\cdot\left(\nabla\phi+\cos\theta\nabla\phi_0\right),
			\end{aligned}	
		\end{equation}
	\end{subequations}
	where the incompressible condition (\ref{eq-NS1}) is applied. With the help of above equations, the variation of free energy can be expressed as
	\begin{equation}\label{eq-DtEf}
		\begin{aligned}
			\frac{\mathrm{d}\mathcal{E}_F}{\mathrm{d}t}=&\frac{\mathrm{d}}{\mathrm{d}t}\int_{\Omega}\left[f_1\left(\phi,\nabla\phi\right)+f_2\left(\phi_0,\phi,\nabla\phi_0,\nabla\phi\right)\right]\mathrm{d}\Omega\\
			=&\int_{\Omega}\left\{\partial_t\phi\left[\frac{3\sigma_{12}}{D}\phi\left(\phi-1\right)\left(\phi+1\right)+\frac{9\sigma_{12}}{D}\phi_0^2\phi+\frac{3\sigma_{12}\cos\theta}{D}\phi_0\left(3\phi^2+\phi_0^2-1\right)\right]+\frac{3D\sigma_{12}}{8}\nabla\left(\partial_t\phi\right)\cdot\left(\nabla\phi+\cos\theta\nabla\phi_0\right)\right\}\mathrm{d}\Omega\\
			=&\int_{\Omega}\left\{\partial_t\phi\mu_{\phi}+\frac{3D\sigma_{12}}{8}\nabla\cdot\left[\partial_t\phi\left(\nabla\phi+\cos\theta\nabla\phi_0\right)\right]\right\}\mathrm{d}\Omega\\
			=&\int_{\partial\Omega}\left[M\mu_{\phi}\nabla\mu_{\phi}+\frac{3D\sigma_{12}}{8}\partial_t\phi\left(\nabla\phi+\cos\theta\nabla\phi_0\right)\right]\cdot\mathrm{d}\mathbf{s}-\int_{\Omega}\mu_{\phi}\mathbf{u}\cdot\nabla\phi\mathrm{d}\Omega-\int_{\Omega}M\nabla\mu_{\phi}\cdot\nabla\mu_{\phi}\mathrm{d}\Omega.
		\end{aligned}
	\end{equation}
	
	Finally, when all boundary integrals vanish with some proper boundary conditions, and the other body forces are ignored except for the pressure gradient, the time derivative of the total energy can be written as
	\begin{equation}
		\frac{\mathrm{d}\mathcal{E}_{total}}{\mathrm{d}t}=\frac{\mathrm{d}\mathcal{E}_F}{\mathrm{d}t}+\frac{\mathrm{d}\mathcal{E}_K}{\mathrm{d}t}=-\int_{\Omega}\left\{M\nabla\mu_{\phi}\cdot\nabla\mu_{\phi}+\frac{\mu}{2\left(1-\phi_0\right)}\left[\nabla\mathbf{u}+\left(\nabla\mathbf{u}\right)^\top\right]:\left[\nabla\mathbf{u}+\left(\nabla\mathbf{u}\right)^\top\right]\right\}\mathrm{d}\Omega\leq0,
	\end{equation}
	where Eqs. (\ref{eq-DtEk}) and (\ref{eq-DtEf}) are adopted. The proof is complete.
\end{proof}

\subsection{Governing equations for free rigid particles}\label{sec-particleEq}
When free rigid particles are considered in the multiphase flows, the following governing equations for particle motion are also needed,
\begin{subequations}\label{eq-particle}
	\begin{equation}
		M_p\frac{\mathrm{d}\mathbf{u}_p}{\mathrm{d}t}=\mathbf{F}_f+\mathbf{F}_f^{in}+\left(1-\frac{\rho}{\rho_p}\right)M_p\mathbf{g},
	\end{equation}
	\begin{equation}
		I_p\frac{\mathrm{d}\bm{\omega}_p}{\mathrm{d}t}=\mathbf{T}_f+\mathbf{T}_f^{in},
	\end{equation}
	\begin{equation}
		\frac{\mathrm{d}\mathbf{x}_p}{\mathrm{d}t}=\mathbf{u}_p,
	\end{equation}
\end{subequations} 
where $M_p$ and $I_p$ are the mass and rotational inertia of the particle, $\rho_p$ is the particle density, $\mathbf{x}_p$ is the mass center of the particle, $\mathbf{u}_p$ is the corresponding translational velocity of the centroid and $\bm{\omega}_p$ is the angular velocity. $\mathbf{g}$ is the gravity acceleration, $\mathbf{F}_f$ and $\mathbf{T}_f$ are the hydrodynamic force and torque, and they are related to the force $\rho\mathbf{f}$,
\begin{equation}\label{eq-Ff}
	\mathbf{F}_f=-\int_{\Omega}\rho\mathbf{f}\mathrm{d}\Omega,\quad \mathbf{T}_f=-\int_{\Omega}\left(\mathbf{x}-\mathbf{x}_p\right)\times\rho\mathbf{f}\mathrm{d}\Omega.
\end{equation}
In addition, as introduced in some IBMs \cite{Suzuki2011CF,Chen2020IJNME}, the force and torque $\mathbf{F}_f^{in}$ and $\mathbf{T}_f^{in}$ are the corrections to the internal fluid effect that compel the velocity of the artificial fluid inside the particle to be consistent with the translational and rotational velocities of the particle \cite{Suzuki2011CF},
\begin{equation}\label{eq-Ffin}
	\mathbf{F}_f^{in}=\frac{\mathrm{d}}{\mathrm{d}t}\int_{V_p}\rho\mathbf{u}\mathrm{d}V_p=\rho V_p\frac{\mathrm{d}\mathbf{u}}{\mathrm{d}t}=\rho V_p\frac{\mathrm{d}\mathbf{u}_p}{\mathrm{d}t},\quad \mathbf{T}_f^{in}=\frac{\mathrm{d}}{\mathrm{d}t}\int_{V_p}I_f\bm{\omega}_f\mathrm{d}V_p=I_f\frac{\mathrm{d}\bm{\omega}_f}{\mathrm{d}t}=I_f\frac{\mathrm{d}\bm{\omega}_p}{\mathrm{d}t},
\end{equation}
where $V_p$ is the volume of the particle, $I_f$ is the moment of inertia for internal fluid, and can be calculated by $I_f=I_p\rho/\rho_p$.

\section{Numerical methods}\label{methods}
In this section, we will present numerical methods for the mathematical models mentioned above. For the governing equations for particle motion in Section \ref{sec-particleEq}, the first-order explicit Euler scheme is applied, and the details are not shown here. In the following, the multiple-relaxation-time (MRT) LB models for phase and flow fields are presented.

The general evolution equation of LB method for convection-diffusion and NS equations can be written as \cite{Chai2020PRE,Chai2023PRE}
\begin{equation}\label{eq-LBE}
	f_{i,k}\left(\mathbf{x}+\mathbf{c}_i\Delta t,t+\Delta t\right)=f_{i,k}\left(\mathbf{x},t\right)-\Lambda_{ij}^k\left[f_{j,k}\left(\mathbf{x},t\right)-f_{j,k}^{eq}\left(\mathbf{x},t\right)\right]+\Delta t\left(\delta_{ij}-\frac{\Lambda_{ij}}{2}\right)F_{j,k},
\end{equation} 
where $f_{i,k}\left(\mathbf{x},t\right)$ is the distribution function at position $\mathbf{x}$ and time $t$ along the $i$-th direction in the discrete velocity space ($k=1,2,\cdots$ for different fields), $f_{i,k}^{eq}\left(\mathbf{x},t\right)$ is the corresponding equilibrium distribution function, and $F_{i,k}\left(\mathbf{x},t\right)$ is the distribution function of source/force term. $\mathbf{c}_i$ is the lattice speed, $\Delta t$ is the time step, and $\bm{\Lambda}^k=\left(\Lambda_{ij}^k\right)$ represents an invertible collision matrix. 

It is worth noting that through properly choosing some specific moments of the distribution functions $f_{i,k}^{eq}$ and $F_{i,k}$, and the relation between the characteristic parameter in the governing equation and an eigenvalue of the collision matrix, one can get the LB models for different macroscopic equations via some asymptotic analysis methods, such as the direct Taylor expansion or Chapman-Enskog analysis.

\subsection{Lattice Boltzmann model for phase field}
Similar to the LB model for CH equation in some previous works \cite{Liang2014PRE,Wang2019Capillarity}, the equilibrium and source distribution functions are expressed as
\begin{equation}
	f_{i,1}^{eq}=\begin{cases}
		\phi+\left(\omega_i-1\right)\eta\mu_{\phi},\quad &i=0\\
		\omega_i\eta\mu_{\phi}+\omega_i\mathbf{c}_i\cdot\phi\mathbf{u}/c_s^2,\quad &i\neq0
	\end{cases},\quad
	F_{i,1}=\frac{\omega_i\mathbf{c}_i\cdot\partial_t\left(\phi\mathbf{u}\right)}{c_s^2},
\end{equation}
where $\omega_i$ is the weight coefficient, $c_s$ is the lattice sound speed, and $\eta$ represents an adjustable parameter.

Through the direct Taylor expansion under the general LB framework \cite{Chai2020PRE,Chai2023PRE}, Eq. (\ref{eq-CHE}) can be correctly recovered from the LB evolution equation with the relation $M=\left(1/s_{1\phi}-1/2\right)\eta c_s^2\Delta t$, where $s_{1\phi}$ is an eigenvalue of the collision matrix $\bm{\Lambda}$.
Finally, the order parameter is computed by
\begin{equation}\label{eq-phi}
	\phi=\sum_if_{i,1}.
\end{equation}
\subsection{Lattice Boltzmann model for flow field}
To derive the incompressible NS equations, the equilibrium distribution function for the flow field is designed as
\begin{equation}
	f_{i,2}^{eq}=\lambda_i+\frac{\omega_i\mathbf{c}_i\cdot\rho\mathbf{u}}{c_s^2}+\omega_i\left[\frac{\left(c_{i\alpha}c_{i\alpha}-c_s^2\right)\left(\rho u_{\alpha}u_{\alpha}-S_{\alpha}u_{\alpha}\right)}{c^2c_s^2-c_s^4}+\frac{c_{i\alpha}c_{i\bar{\alpha}}}{c_s^4}\left(\rho u_{\alpha}u_{\bar{\alpha}}-\frac{S_{\alpha}u_{\bar{\alpha}}+u_{\alpha}S_{\bar{\alpha}}}{2}\right)\right],
\end{equation}
where $\lambda_0=\left(\omega_0-1\right)P/c_s^2$, $\lambda_i=\omega_iP/c_s^2$ $\left(i\neq0\right)$, $\alpha=1,2,\cdots,d$, $\alpha<\bar{\alpha}\leq d$ with $d$ being the dimensionality. $c=\Delta x/\Delta t$, and $\Delta x$ is the lattice spacing. Here the two terms in the square bracket correspond to the diagonal and non-diagonal parts of the momentum flux, respectively. In addition, the distribution function of force term can be given by
\begin{equation}
	F_{i,2}=\omega_i\left[\mathbf{u}\cdot\nabla\rho+\frac{\mathbf{c}_i\cdot\left(\mathbf{F}+\rho\mathbf{f}\right)}{c_s^2}+\frac{\left(c_{i\alpha}c_{i\alpha}-c_s^2\right)M_{\alpha\alpha}^{2F}}{c^2c_s^2-c_s^4}+\frac{c_{i\alpha}c_{i\bar{\alpha}}M_{\alpha\bar{\alpha}}^{2F}}{c_s^4}\right],
\end{equation}
where $\mathbf{F}=\mu_{\phi}\nabla\phi+\mu_{\phi_0}\nabla\phi_0+\mathbf{F}_b+\mathbf{F}_c$, $\mathbf{F}_c=\nabla\cdot\left(\mathbf{Su}-\mathbf{uS}\right)/2$ is a corrected force term to recover the consistent momentum equation (\ref{eq-NS2}), and the tensor $\mathbf{M}_{2F}$ is defined by
\begin{equation}
	\mathbf{M}_{2F}=\partial_t\left(\rho\mathbf{u}\mathbf{u}-\frac{\mathbf{S}\mathbf{u}+\mathbf{u}\mathbf{S}}{2}\right)+c_s^2\mathbf{u}\nabla\rho+c_s^2\left(\nabla\rho\right)\mathbf{u}+\left(c^2-3c_s^2\right)\left(\mathbf{u}\cdot\nabla\rho\right)\mathbf{I}.
\end{equation}

Based on above expressions, the NS equations (\ref{eq-NSE}) can be correctly recovered with the following relation \cite{Chai2023PRE}, 
\begin{equation}
	\frac{\mu}{\rho\left(1-\phi_0\right)}=\left(\frac{1}{s_{2a}}-\frac{1}{2}\right)\frac{c^2-c_s^2}{2}\Delta t=\left(\frac{1}{s_{2b}}-\frac{1}{2}\right)c_s^2\Delta t,
\end{equation} 
where $s_{2a}$ and $s_{2b}$ are two relaxation parameters, and will be stated below.

Finally, the macroscopic velocity and pressure are calculated by
\begin{subequations}
	\begin{equation}\label{eq-uStar}
		\mathbf{u}^*=\frac{1}{\rho}\sum_i\mathbf{c}_if_{i,2}+\frac{\Delta t}{2\rho}\mathbf{F}, 
	\end{equation}
	\begin{equation}\label{eq-u}
		\mathbf{u}=\mathbf{u}^*+\frac{\Delta t}{2}\mathbf{f},\quad P=\frac{c_s^2}{1-\omega_0}\left[\sum_{i\neq0}f_{i,2}+\left(\frac{1}{2}+H\right)\Delta t\mathbf{u}\cdot\nabla\rho-\omega_0\frac{\left(\rho\mathbf{u}-\mathbf{S}\right)\cdot\mathbf{u}}{c^2-c_s^2}+\frac{K}{c^2}\frac{\Delta t}{2}\partial_t\left(\rho\mathbf{u}\cdot\mathbf{u}-\mathbf{S}\cdot\mathbf{u}\right)\right],
	\end{equation}
\end{subequations}
where $\mathbf{u}^*$ is the velocity without considering the fluid-solid interaction, $\mathbf{u}$ is the corrected velocity. The force $\mathbf{f}$ can be discretized by $\phi_0\left(\mathbf{u}_s-\mathbf{u}^*\right) /\Delta t$ with $\mathbf{u}_s=\mathbf{u}_p+\bm{\omega}_p\times\left(\mathbf{x}-\mathbf{x}_p\right)$ being the velocity of the solid point. $K$ and $H$ are two factors, for instance, $K=1-d_0$ and $H=K\left[\left(1-d_0\right)/s_0-\left(2-4d_0\right)/s_{2a}+\left(1-3d_0\right)/2\right]$ for two-dimensional problem, while for three-dimensional case, $K=1-2d_0$ and $H=K\left[\left(1-d_0\right)/s_0-\left(3-7d_0\right)/s_{2a}+\left(1-3d_0\right)\right]$. 
Here $s_0$ is a relaxation parameter related to the collision matrix, $d_0=c_s^2/c^2$ is an adjustable scale factor, and is fixed as $1/3$ in classical MRT-LB methods. $\partial_t\left(\rho\mathbf{uu}\right)$ is usually approximated by $\mathbf{u}\mathbf{F}+\mathbf{F}\mathbf{u}$, while other temporal derivatives are discretized by the explicit Euler scheme, and the gradient and Laplacian operators are computed by the second-order isotropic central schemes \cite{Zhan2022PRE,Wang2019Capillarity}.

\subsection{Computational procedure}
The computational produce of the numerical algorithm is listed as follows.
\begin{itemize}
	\item[1.] Initialize physical variables and the distribution functions by the equilibrium distribution functions.
	\item[2.] Implement LB evolution equation (\ref{eq-LBE}) for phase and flow fields in the computational domain.
	\item[3.] Compute the order parameter $\phi$ and density $\rho$ through Eqs. (\ref{eq-phi}) and (\ref{eq-rho}).
	\item[4.] Calculate the surface tension forces $\mu_{\phi}\nabla\phi$ and $\mu_{\phi_0}\nabla\phi_0$.
	\item[5.] Compute the velocity $\mathbf{u}^*$ and the fluid-solid interaction force $\mathbf{f}$.
	\item[6.] Update the velocity and pressure by using Eq. (\ref{eq-u}).
	\item[7.] Calculate the hydrodynamic force $\mathbf{F}_f$ and torque $\mathbf{T}_f$ through the discretization of Eq. (\ref{eq-Ff}).
	\item[8.] Compute the translational velocity, angular velocity, and position of the particle according to Eq. (\ref{eq-particle}).
	\item[9.] Update the smooth indicator function $\phi_0$ for the particle by using the hyperbolic tangent function.
	\item[10.] Calculate the inside force $\mathbf{F}_f^{in}$ and torque $\mathbf{T}_f^{in}$ according to the explicit discretization of Eq. (\ref{eq-Ffin}).
	\item[11.] Repeat steps 2–10 until the convergence is reached.
\end{itemize}

\section{Numerical validations and discussion}\label{Simulations}
In the framework of LB method, there are some LB models based on different collision matrices, such as the classical MRT-LB model \cite{Humieres1992,Lallemand2000PRE,Coveney2002}, the cascaded or central-moment LB model \cite{Geier2006PRE,Premnath2012CiCP}, the Hermite moment LB model \cite{Coreixas2017PRE}, central-Hermite moment LB model \cite{Coreixas2019PRE,Mattila2017PF} and so on. However, under the unified framework of modeling of the MRT-LB model in Refs. \cite{Chai2020PRE,Chai2023PRE}, different forms of the collision matrices can be converted into each other through a specific lower triangular matrix with the diagonal element of unity. 
In this work, the Hermite moment model is applied, and the collision matrix can be written as
\begin{equation}
	\bm{\Lambda}^k=\mathbf{M}_{H}^{-1}\mathbf{S}_{H}^k\mathbf{M}_{H}=\mathbf{M}_0^{-1}\mathbf{S}_0^k\mathbf{M}_0,
\end{equation} 
where $\mathbf{M}_{H}$ and $\mathbf{S}_{H}^k$ are the transformation matrix and relaxation matrix of the Hermite-moment based LB model. As shown in Ref. \cite{Chai2023PRE}, $\mathbf{M}_{H}$ can be written as $\mathbf{N}_{H}\mathbf{M}$, $\mathbf{N}_{H}$ is the lower triangular matrix associated with $\mathbf{M}_{H}$ and natural transformation matrix $\mathbf{M}=\mathbf{C}_d\mathbf{M}_0$, and $\mathbf{C}_d$ is a diagonal matrix formed by the powers of lattice speed $c$. $\mathbf{S}_0^k=\mathbf{C}_d^{-1}\mathbf{N}_{H}^{-1}\mathbf{S}_{H}^k\mathbf{N}_{H}\mathbf{C}_d$ is the new lower triangular relaxation matrix in the moment space formed by $\mathbf{M}_0$. In our simulations for two-dimensional problems, the D2Q9 lattice structure is applied to both phase field and flow field, and the weight coefficients in all directions are given by $\omega_0=1-2d_0+d_0^2$, $\omega_{1-4}=\left(d_0-d_0^2\right)/2$, and $\omega_{5-8}=d_0^2/4$. Other matrices are listed as follows:
\begin{equation}
	\begin{aligned}
		\mathbf{C}_d&=\mathbf{diag}\left(1,c,c,c^2,c^2,c^2,c^3,c^3,c^4\right),\\
		\mathbf{S}_{H}^1&=\mathbf{diag}\left(s_0,s_{1\phi},s_{1\phi},s_2,s_2,s_2,s_3,s_3,s_4\right),\quad\mathbf{S}_{H}^2=\mathbf{diag}\left(s_0,s_1,s_1,s_{2a},s_{2a},s_{2b},s_3,s_3,s_4\right),\\
		\mathbf{M}_0&=\begin{pmatrix*}[r]
			1 &  1 &  1 &  1 &  1 &  1 &  1 &  1 &  1\\
			0 &  1 &  0 & -1 &  0 &  1 & -1 & -1 &  1\\
			0 &  0 &  1 &  0 & -1 &  1 &  1 & -1 & -1\\
			0 &  1 &  0 &  1 &  0 &  1 &  1 &  1 &  1\\
			0 &  0 &  1 &  0 &  1 &  1 &  1 &  1 &  1\\		
			0 &  0 &  0 &  0 &  0 &  1 & -1 &  1 & -1\\	
			0 &  0 &  0 &  0 &  0 &  1 & -1 & -1 &  1\\	
			0 &  0 &  0 &  0 &  0 &  1 &  1 & -1 & -1\\
			0 &  0 &  0 &  0 &  0 &  1 &  1 &  1 &  1\\
		\end{pmatrix*},\quad
		\mathbf{N}_{H}=\begin{pmatrix*}[r]
			1 & 0 & 0 & 0 & 0 & 0 & 0 & 0 & 0 \\
			0 & 1 & 0 & 0 & 0 & 0 & 0 & 0 & 0 \\
			0 & 0 & 1 & 0 & 0 & 0 & 0 & 0 & 0 \\
			-c_s^2 & 0 & 0 & 1 & 0 & 0 & 0 & 0 & 0 \\
			-c_s^2 & 0 & 0 & 0 & 1 & 0 & 0 & 0 & 0 \\
			0 & 0 & 0 & 0 & 0 & 1 & 0 & 0 & 0 \\
			0 & -c_s^2 & 0 & 0 & 0 & 0 & 1 & 0 & 0 \\
			0 & 0 & -c_s^2 & 0 & 0 & 0 & 0 & 1 & 0 \\
			c_s^4 & 0 & 0 & -c_s^2 & -c_s^2 & 0 & 0 & 0 & 1 \\
		\end{pmatrix*}.
	\end{aligned}
\end{equation}

In the following, some problems are used to test the present diffuse-interface model for the gas-liquid-solid multiphase flows.

\begin{figure}
	\centering
	\includegraphics[width=1.5in]{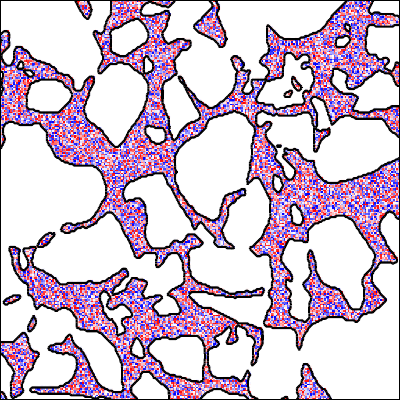}
	\caption{The random initialization of two phases in a porous medium.}
	\label{fig-porousInit}
\end{figure}

\subsection{Phase separation in a porous medium}\label{sec-porous}
We first consider the phase separation in a porous medium based on a $256\times256$ micro-CT image of the poorly sorted unconsolidated fluvial sandpack \cite{Santos2022DB}. The no-flux boundary condition is imposed on all boundaries, and some parameters are set as $M=0.1$, $\sigma_{12}=0.001$, $D=4$, $\rho_1=\rho_2=1$, $\mu_1=\mu_2=0.1$, $d_0=0.4$, $s_i=1$, and $\Delta x=\Delta t=1$. To adopt the present diffuse-interface method, the original data of this porous medium needs to be smoothed by a standard CH equation with finite-time iterations (20 times in this work). As shown in Fig. \ref{fig-porousInit}, the phase-field variable is randomly initialized by $\phi\left(x,y\right)=\left[1-\phi_0\left(x,y\right)\right]\text{rand}(x,y)$, where $\text{rand}(x,y)$ is a function with the random value between -1 and 1. We perform some simulations at the contact angles of $\theta=45^\circ,90^\circ,135^\circ$, and plot the evolution process in Fig. \ref{fig-porous}. In this figure, the droplet sizes increase in time, and some of them also coalesce into the larger ones, which leads to the eventual separation of binary fluid components. Due to the difference of the wetting property on the solid surface, the droplets also show hydrophilic or hydrophobic phenomena in the long-time evolutions, and the accuracy of present diffuse-interface model in predicting the contact angle will be tested in detail in the next part.
Figure \ref{fig-porousEnergy} shows the evolutions of the total energy $\mathcal{E}_{toal}$ at different contact angles. From this figure, one can observe that the total energy of this system decreases with time, which is consistent with the energy dissipation law given by Theorem \ref{Theorem}.
\begin{figure}
	\centering
	\subfigure[$\theta=45^\circ$]{
		\begin{minipage}{0.24\linewidth}
			\centering
			\includegraphics[width=1.5in]{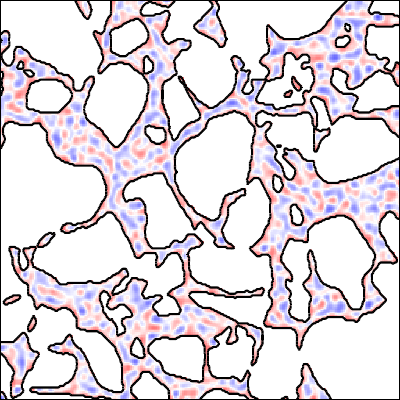}
		\end{minipage}
		\begin{minipage}{0.24\linewidth}
			\centering
			\includegraphics[width=1.5in]{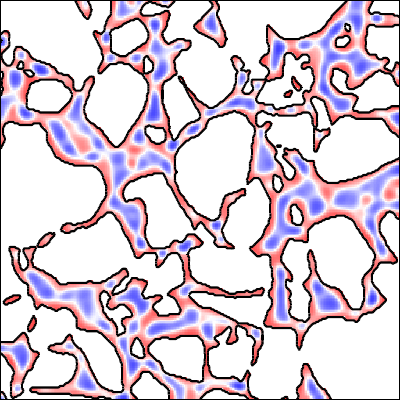}
		\end{minipage}
		\begin{minipage}{0.24\linewidth}
			\centering
			\includegraphics[width=1.5in]{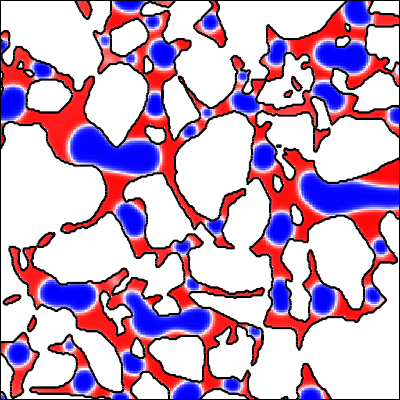}
		\end{minipage}
		\begin{minipage}{0.24\linewidth}
			\centering
			\includegraphics[width=1.5in]{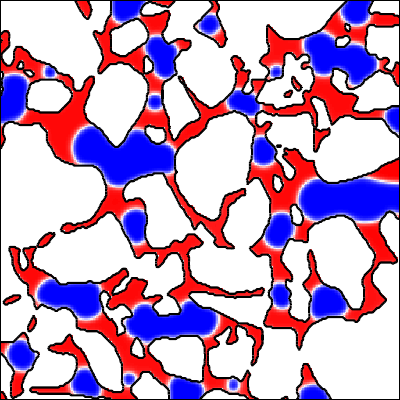}
	\end{minipage}}
	
	\subfigure[$\theta=90^\circ$]{
		\begin{minipage}{0.24\linewidth}
			\centering
			\includegraphics[width=1.5in]{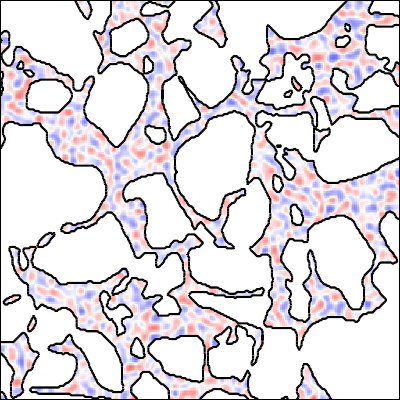}
		\end{minipage}
		\begin{minipage}{0.24\linewidth}
			\centering
			\includegraphics[width=1.5in]{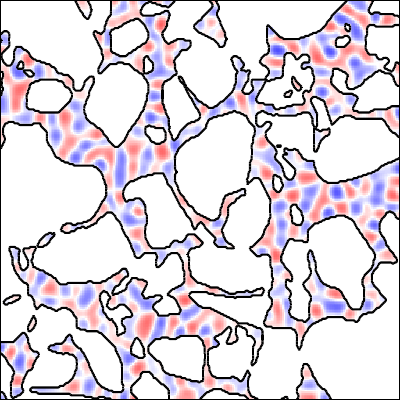}
		\end{minipage}
		\begin{minipage}{0.24\linewidth}
			\centering
			\includegraphics[width=1.5in]{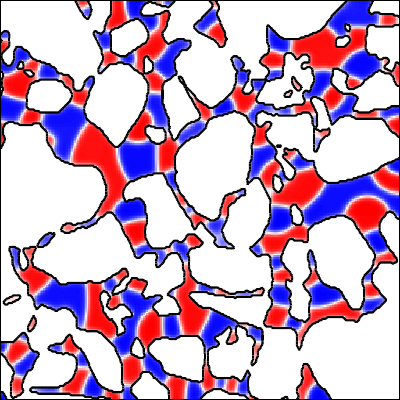}
		\end{minipage}
		\begin{minipage}{0.24\linewidth}
			\centering
			\includegraphics[width=1.5in]{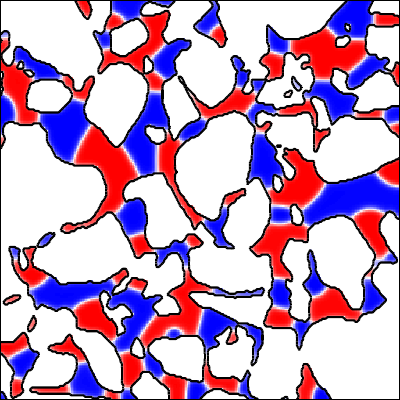}
	\end{minipage}}
	
	\subfigure[$\theta=135^\circ$]{
		\begin{minipage}{0.24\linewidth}
			\centering
			\includegraphics[width=1.5in]{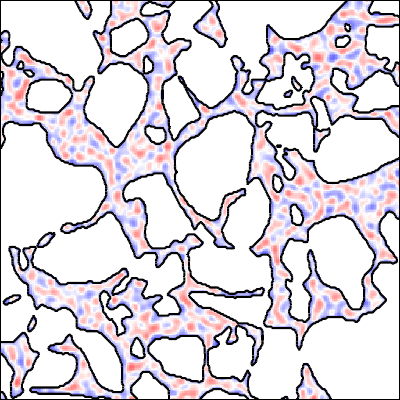}
		\end{minipage}
		\begin{minipage}{0.24\linewidth}
			\centering
			\includegraphics[width=1.5in]{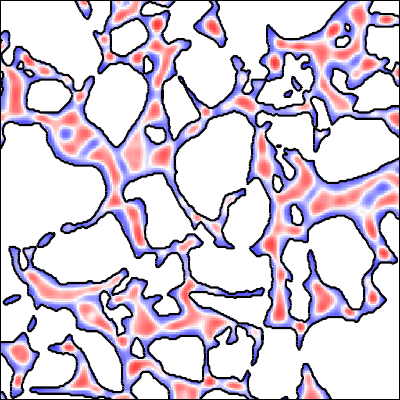}
		\end{minipage}
		\begin{minipage}{0.24\linewidth}
			\centering
			\includegraphics[width=1.5in]{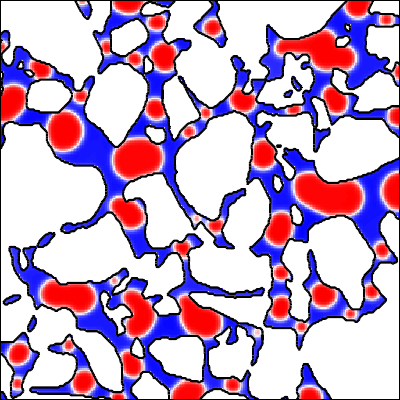}
		\end{minipage}
		\begin{minipage}{0.24\linewidth}
			\centering
			\includegraphics[width=1.5in]{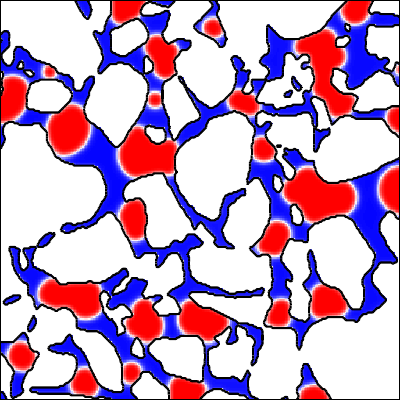}
	\end{minipage}}
	\caption{The evolution processes of the binary phase separation in the porous medium at $t=1\times10^4,1\times10^5,1\times10^6,1\times10^7$ (from left to right columns).}
	\label{fig-porous}
\end{figure}
\begin{figure}
	\centering
	\includegraphics[width=3.5in]{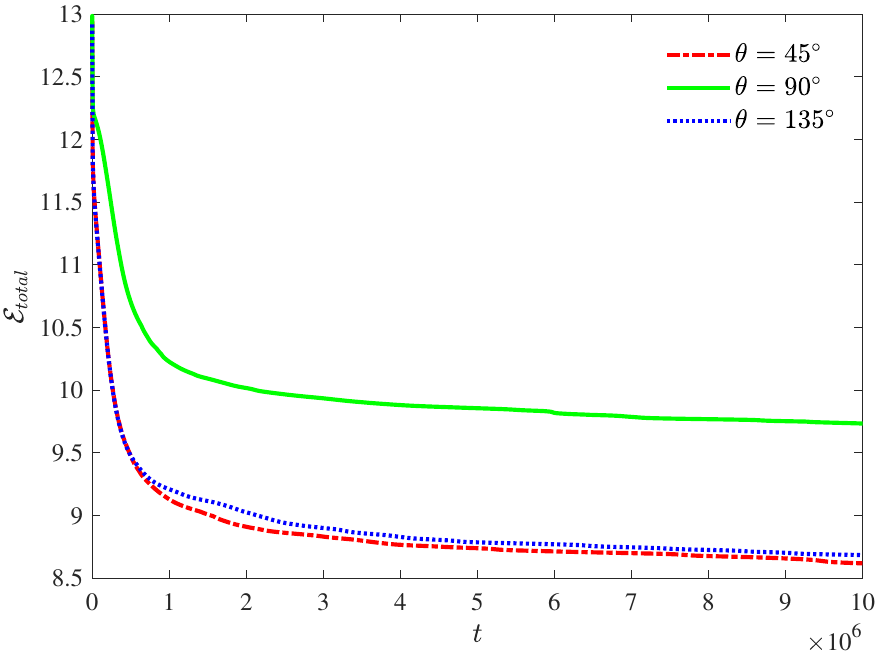}
	\caption{The evolution of the total energy $\mathcal{E}_{total}$ for the binary phase separation in a porous medium.}
	\label{fig-porousEnergy}
\end{figure}
\subsection{A droplet spreading on a stationary cylinder surface}\label{sec-cylinder}
\begin{figure}
	\centering
	\subfigure[$\theta=30^\circ$]{
		\begin{minipage}{0.24\linewidth}
			\centering
			\includegraphics[width=1.5in]{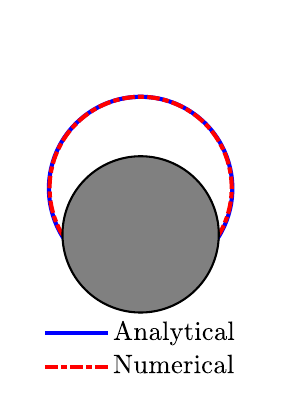}
	\end{minipage}}
	\subfigure[$\theta=60^\circ$]{
		\begin{minipage}{0.24\linewidth}
			\centering
			\includegraphics[width=1.5in]{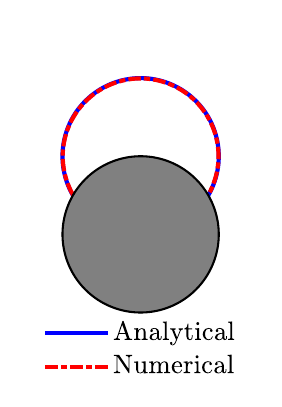}
	\end{minipage}}	
	\subfigure[$\theta=90^\circ$]{
		\begin{minipage}{0.24\linewidth}
			\centering
			\includegraphics[width=1.5in]{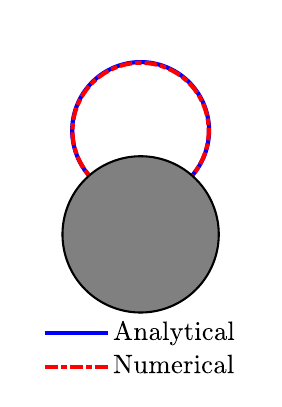}
	\end{minipage}}
	\subfigure[$\theta=120^\circ$]{
		\begin{minipage}{0.24\linewidth}
			\centering
			\includegraphics[width=1.5in]{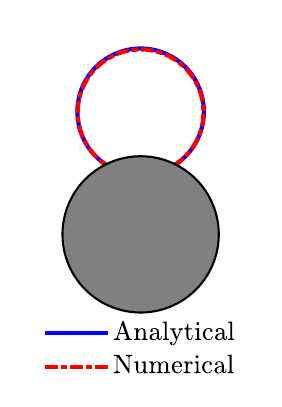}
	\end{minipage}}
	\caption{The equilibrium shapes of the droplet at different contact angles.}
	\label{fig-cylinder}
\end{figure}
\begin{figure}
	\centering
	\includegraphics[width=3.5in]{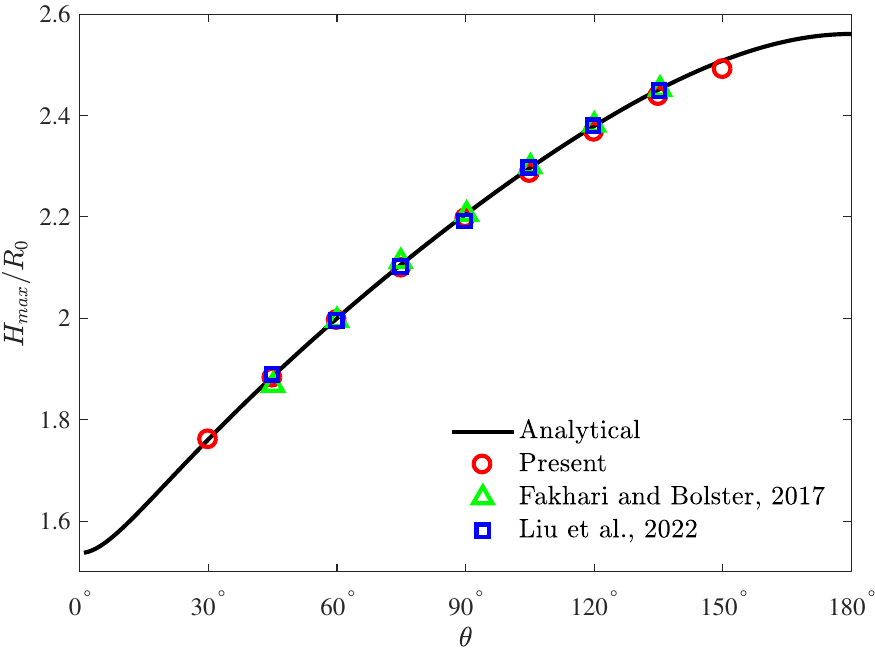}
	\caption{The normalized droplet heights at different contact angles.}
	\label{fig-cylinderH}
\end{figure}
\begin{figure}
	\centering
	\includegraphics[width=3.5in]{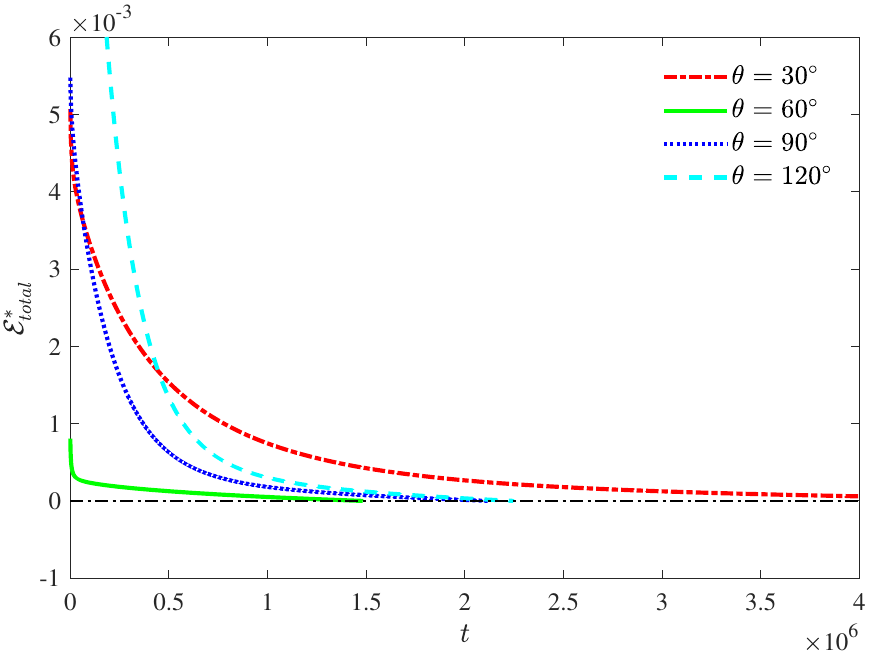}
	\caption{The evolution of normalized total energy $\mathcal{E}^*_{total}=\mathcal{E}_{total}/\mathcal{E}^{eq}_{total}-1$, $\mathcal{E}^{eq}_{total}$ is the total energy at the final equilibrium moment.}
	\label{fig-cylinderEnergy}
\end{figure}

In this part, we continue to consider a droplet spreading on a cylinder surface in the square domain $[-L/2,L/2]\times[-L/4,3L/4]$ with $L=256$ to test the capacity of the present diffuse-interface model in capturing the fluid interface and predicting the contact angle. In the following simulations, the periodic boundary condition is used in $x$-direction and no-flux boundary condition is imposed at the bottom and top boundaries, a solid cylinder with radius $R_p$ is located at $\left(0,0\right)$, and a droplet with radius $R_0$ is placed on the cylinder surface. Some physical parameters are set as $R_0=R_p=50$, $\rho_1=1$, $\rho_2=0.1$, and the others are the same as those in Section \ref{sec-porous}. Initially, the phase-field variables are given as
\begin{equation}
	\begin{aligned}
		&\phi_0\left(x,y\right)=\frac{1}{2}+\frac{1}{2}\tanh\frac{R_p-\sqrt{x^2+y^2}}{D/2},\\
		&\phi\left(x,y\right)=\left[1-\phi_0\left(x,y\right)\right]\tanh\frac{R_0-\sqrt{x^2+\left(y-R_p\right)^2}}{D/2}.
	\end{aligned}
\end{equation}

We conduct some simulations and plot the final equilibrium shapes of the droplet at a large range of contact angles in Fig. \ref{fig-cylinder}. As shown in this figure, the shapes of the droplet agree well with the analytical solutions. To give a quantitative comparison, the maximum height of the droplet at equilibrium state ($H_{max}$) is measured, and it actually can be obtained by the initial shape under the condition of equal area,
\begin{equation}
	H_{max}=r+k,\quad \pi r^2-r^2\cos^{-1}\left(\frac{k^2+r^2-R_p^2}{2kr}\right)-R_p^2\cos^{-1}\left(\frac{k^2-r^2+R_p^2}{2kR_p}\right)+k\sqrt{R_p^2-\left(\frac{k^2-r^2+R_p^2}{2k}\right)}=A_0,
\end{equation}
where $A_0$ is the initial area, $k$ is related to the finial radius $r$ and contact angle $\theta$,
\begin{equation}
	k=\sqrt{r^2+R_p^2-2rR_p\cos\theta}.
\end{equation}
Figure \ref{fig-cylinderH} presents the corresponding normalized maximum height of the droplet, and the present results are in a good agreement with the analytical and previous data \cite{Fakhari2017JCP,Liu2022MMS}. Additionally, there is no external force in this system such that the total energy should also decrease with time, and the numerical results in Fig. \ref{fig-cylinderEnergy} indeed confirm this property.

\subsection{Sedimentation of a particle}
\begin{figure}
	\centering
	\subfigure[]{
		\begin{minipage}{0.48\linewidth}
			\centering
			\includegraphics[width=3.0in]{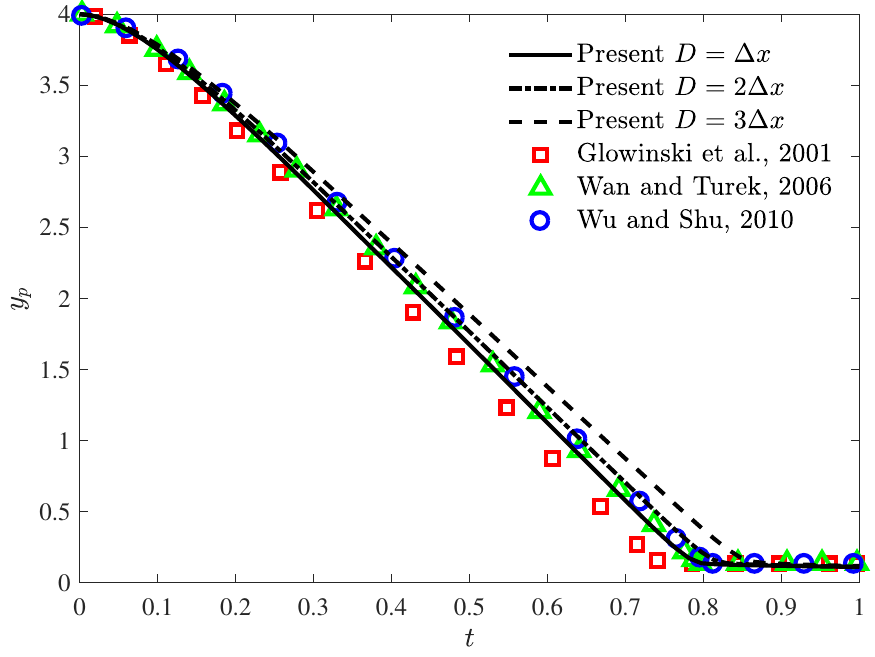}
	\end{minipage}}
	\subfigure[]{
		\begin{minipage}{0.48\linewidth}
			\centering
			\includegraphics[width=3.0in]{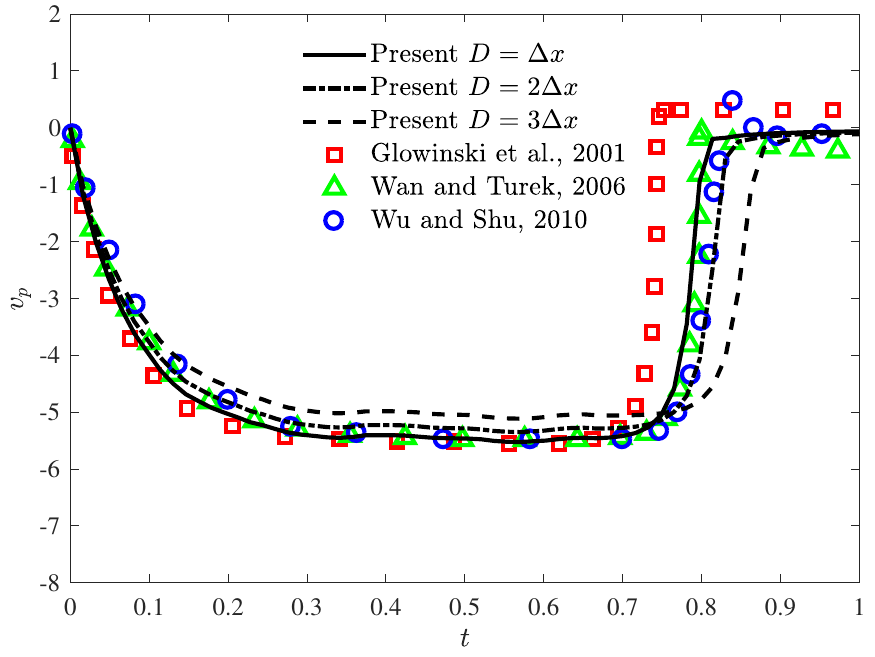}
	\end{minipage}}
	\caption{Time histories of some quantities of the particle [(a) the vertical position $y_p$, (b) the vertical translational velocity $v_p$].}
	\label{fig-impact0}
\end{figure}

We further consider the problem of a single particle sedimentation in a stratified medium to test the accuracy of the present diffuse-interface model in predicting the hydrodynamic force. Initially, a circular particle with the radius $R_p=0.125$ and density $\rho_p=1.25$ is placed at the centerline of a channel with the width $W=2$ and length $H=6$, and it will free fall from the position $\left(1,4\right)$ under the gravity acceleration $|\mathbf{g}|=980$. In the channel, the fluid 1 is placed in the lower region of $y\leq H/2$ and the upper half region $y>H/2$ is filled with fluid 2, and the no-slip boundary condition is applied on all boundaries. Additionally, a body force $\left(\rho-\rho_2\right)\mathbf{g}$ is imposed on the fluids. 
In the simulation of this problem, the lattice spacing and time step are fixed as $\Delta x=1/192$ and $\Delta t=\Delta x/64$. We first focus on the homogeneous medium with $\rho_1=\rho_2=1$ and $\mu_1=\mu_2=0.1$, and present the time histories of the vertical position $y_p$ and the vertical translational velocity $v_p$ of the particle in Fig. \ref{fig-impact0}, from which one can find that when the interface width $D$ becomes smaller, the numerical results are more consistent with the data based on some sharp-interface methods \cite{Glowinski2001JCP,Wan2006IJNMF,Wu2010CiCP}. 

However, when we consider the stratified medium with a density difference by the phase-field model, the interface width should be greater than $2\Delta x$ at least to ensure to capture the large deformation of the fluid interface. We then present a comparison of different values of $\rho_1$ in Fig. \ref{fig-impact} where the interface width is set to be $D=2\Delta x$ and other parameters are the same as those stated previously. As shown in Fig. \ref{fig-impact}(a), when $\rho_1=1.1$, the particle needs more time to fall to the bottom boundary, compared to the homogeneous case, this is because it is subjected to a larger buoyancy in fluid 1. When $\rho_1=1.2$, the buoyancy becomes much larger but still less than the gravity of the particle, and thus the particle finally falls to the bottom of the channel after some oscillations at the fluid interface and the long-time sedimentation in fluid 1. For the last case with $\rho_1=1.3$, the gravity of the particle is smaller than the buoyancy from fluid 1 such that the particle oscillates at the fluid interface until the kinetic energy is dissipated and reaches a steady state. What is more, the wettability almost has no effect on the vertical position of the particle, except for the case of $\rho_1=1.2$ where the sedimentation is slower with a larger contact angle. The time histories of the vertical velocity of a neutral particle with different values of the density $\rho_1$ are also shown in Fig. \ref{fig-impact}(b), from which one can observe that the velocity takes a longer time to go back to zero at a larger density $\rho_1$.     
In addition, we also present the dynamics of the particle and the surrounding fluids in Fig. \ref{fig-impact125}, where $\rho_1=1.2$, $\theta=45^\circ$ and $135^\circ$. From this figure, one can see that a hydrophilic particle immerses in fluid 1 faster than a hydrophobic particle. When the particle is immersed in fluid 1, the hydrophilic particle is fully surrounded by fluid 1, while a bubble is adhered to the hydrophobic particle all the way. 

\begin{figure}
	\centering
	\subfigure[]{
		\begin{minipage}{0.48\linewidth}
			\centering
			\includegraphics[width=3.0in]{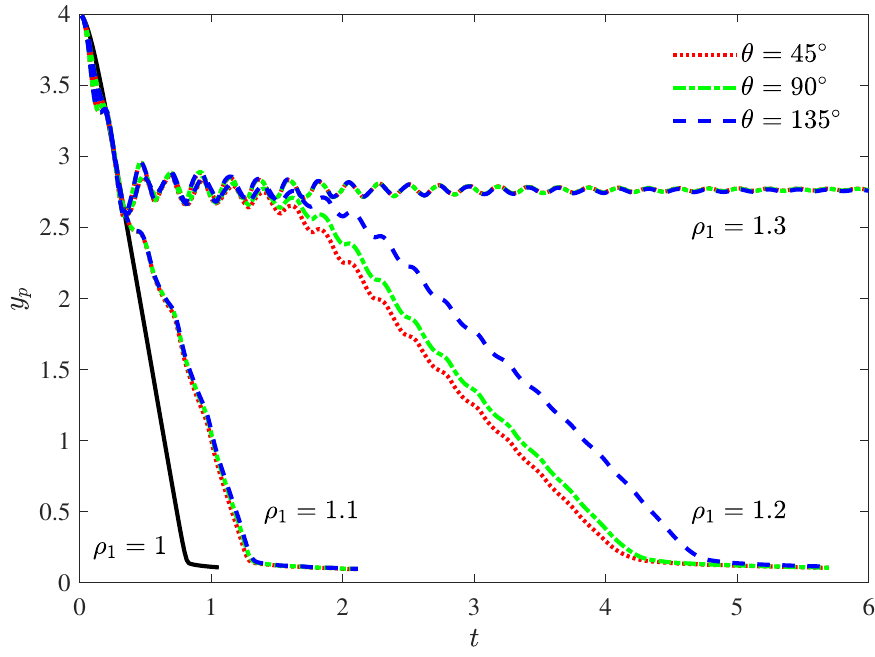}
	\end{minipage}}
	\subfigure[]{
		\begin{minipage}{0.48\linewidth}
			\centering
			\includegraphics[width=3.0in]{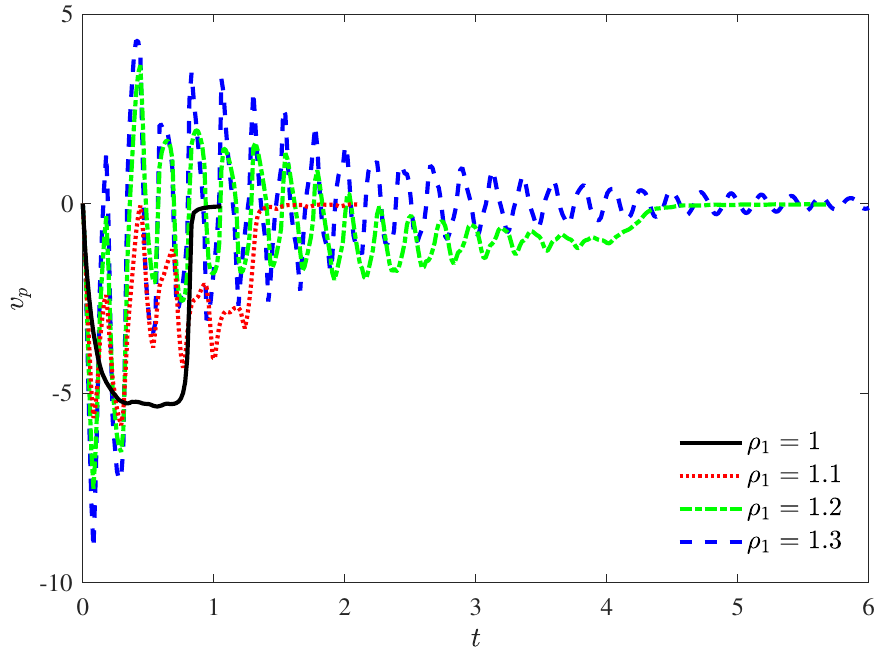}
	\end{minipage}}
	\caption{Time histories of some quantities of the particle [(a) the vertical position $y_p$, (b) the vertical translational velocity $v_p$].}
	\label{fig-impact}
\end{figure}
\begin{figure}
	\centering
	\subfigure[$\theta=45^\circ$]{
		\begin{minipage}{0.15\linewidth}
			\centering
			\includegraphics[width=0.9in]{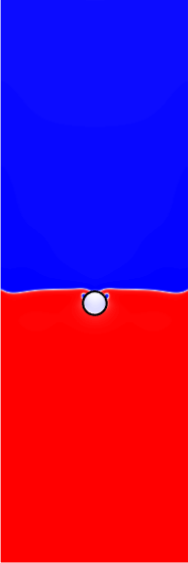}
	\end{minipage}
		\begin{minipage}{0.15\linewidth}
			\centering
			\includegraphics[width=0.9in]{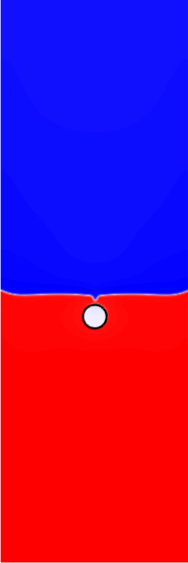}
	\end{minipage}
		\begin{minipage}{0.15\linewidth}
			\centering
			\includegraphics[width=0.9in]{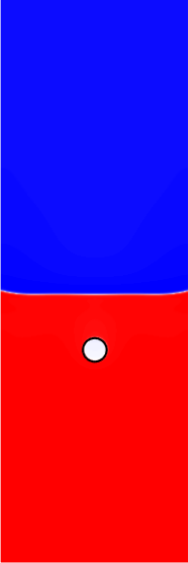}
	\end{minipage}
		\begin{minipage}{0.15\linewidth}
			\centering
			\includegraphics[width=0.9in]{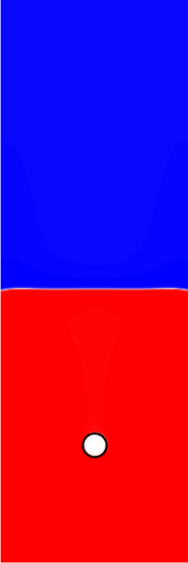}
	\end{minipage}
		\begin{minipage}{0.15\linewidth}
			\centering
			\includegraphics[width=0.9in]{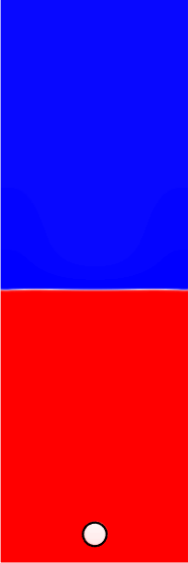}
	\end{minipage}}
	\subfigure[$\theta=135^\circ$]{
		\begin{minipage}{0.15\linewidth}
			\centering
			\includegraphics[width=0.9in]{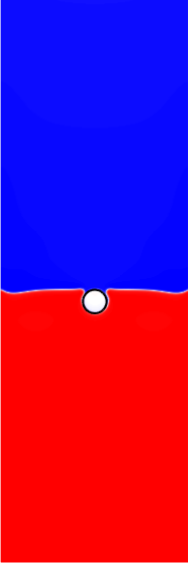}
		\end{minipage}
		\begin{minipage}{0.15\linewidth}
			\centering
			\includegraphics[width=0.9in]{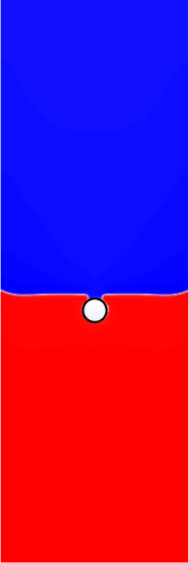}
		\end{minipage}
		\begin{minipage}{0.15\linewidth}
			\centering
			\includegraphics[width=0.9in]{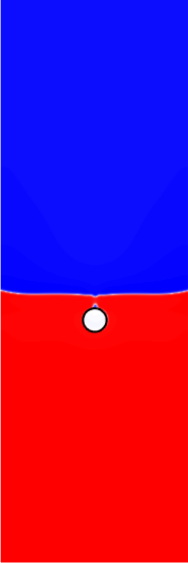}
		\end{minipage}
		\begin{minipage}{0.15\linewidth}
			\centering
			\includegraphics[width=0.9in]{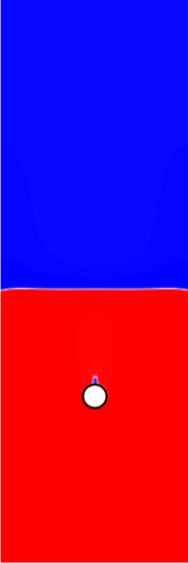}
		\end{minipage}
		\begin{minipage}{0.15\linewidth}
			\centering
			\includegraphics[width=0.9in]{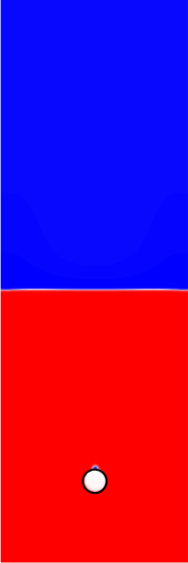}
	\end{minipage}}
	\caption{Snapshots of the particle sedimentation in a stratified medium at $t=1,1.5,2,3,4$ (from left to right columns).}
	\label{fig-impact125}
\end{figure}

\subsection{Balance of wetting particles on the liquid-liquid interface}
\begin{figure}
	\centering
	\subfigure[$Bo=-0.4096$]{
		\begin{minipage}{0.32\linewidth}
			\centering
			\includegraphics[width=1.5in]{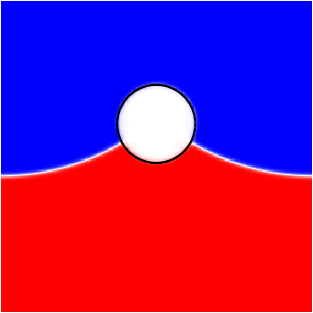}
	\end{minipage}}
	\subfigure[$Bo=0$]{
		\begin{minipage}{0.32\linewidth}
			\centering
			\includegraphics[width=1.5in]{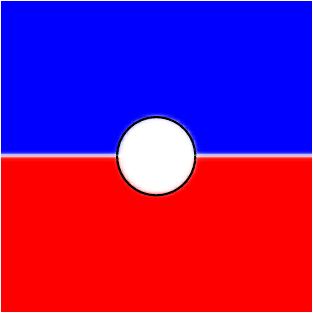}
	\end{minipage}}
	\subfigure[$Bo=0.4096$]{
		\begin{minipage}{0.32\linewidth}
			\centering
			\includegraphics[width=1.5in]{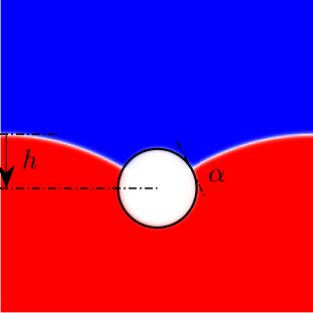}
	\end{minipage}}
	\caption{Snapshots of the mechanical equilibrium states of a neutral particle trapped at fluid-fluid interface in presence of gravity.}
	\label{fig-1particle}
\end{figure}
\begin{figure}
	\centering
	\includegraphics[width=3.5in]{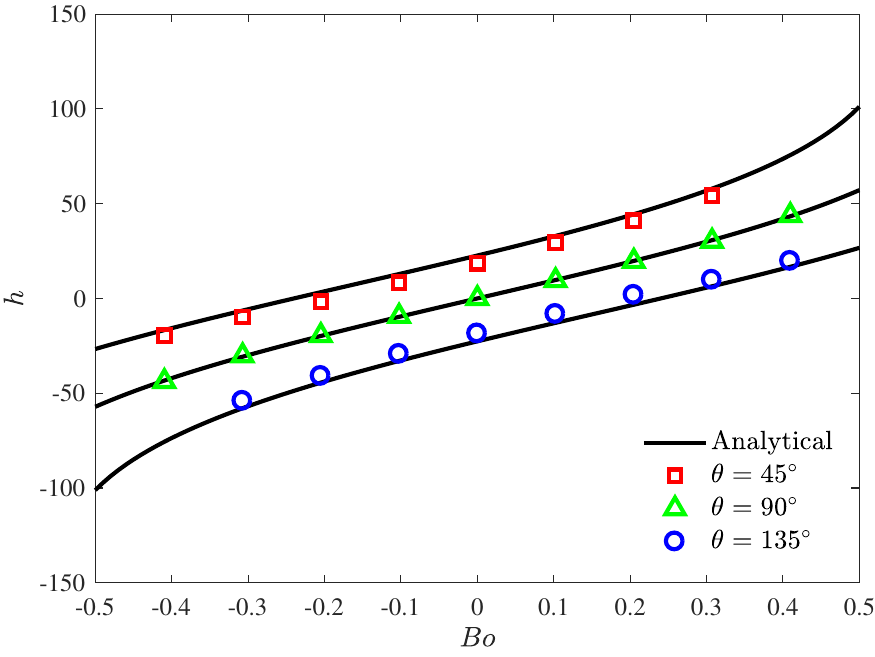}
	\caption{The downward distance $h$ from the fluid interface to the center of the particle.}
	\label{fig-1particle-Com}
\end{figure}

The fourth problem we consider is the balance of wetting particles on the liquid-liquid interface in presence of gravity. The computational domain is a square chamber with the length $L$, the periodic boundary condition is applied at left and right boundaries and no-flux boundary condition is adopted at upper and lower boundaries. The liquid-liquid interface is initialized at $y=L/2$, and the wetting particles with the same radius $R_p$ are half immersed in two fluids. To conduct a comparison with the theoretical solution \cite{Mino2020PRE}, the density ratio of two fluids is set to be the same ($\rho_1=\rho_2=\rho_f$). The particles will move to the equilibrium positions under the balance of gravity and surface tension, which can be depicted by the Bond number,
\begin{equation}
	Bo=\frac{\left(\rho_p-\rho_f\right)R_p^2|\mathbf{g}|}{\sigma}.
\end{equation}

We first study a single particle placed at the central of the liquid-liquid interface. As shown in Fig. \ref{fig-1particle}(c), the downward distance $h$ from the fluid interface at the left (or right) boundary to the center of the particle can be obtained from the following geometric relation,
\begin{equation}\label{singleH}
	h=R_p\cos\left(\theta-\alpha\right)+\frac{L}{\pi Bo}\left(1-\cos\alpha\right),
\end{equation}
where $\alpha$ is the slope angle of the fluid-fluid interface, and can be determined from the force balance \cite{Mino2020PRE},
\begin{equation}
	\frac{L}{2}=R_p\sin\left(\theta-\alpha\right)+\frac{L}{\pi Bo}\sin\alpha.
\end{equation} 
For this problem, we set $L=256$, $R_p=32$, $\rho_f=1$, $|\mathbf{g}|=2\times10^{-6}$, and the other parameters are the same as those in Section \ref{sec-porous}. We perform some simulations, and show the results in Fig. \ref{fig-1particle}. From this figure, one can find that the particle moves toward its mechanically equilibrium position in the gravitational field, and it moves upwards with $Bo<0$ but downwards when the gravity is larger than the buoyancy. Figure \ref{fig-1particle-Com} presents a comparison of the measured distance $h$ with the theoretical solution (\ref{singleH}), and an agreement between them is observed. 

\begin{figure}
	\centering
	\subfigure[$\rho_{p,1}=\rho_{p,2}=0.5$]{
		\begin{minipage}{0.32\linewidth}
			\centering
			\includegraphics[width=1.5in]{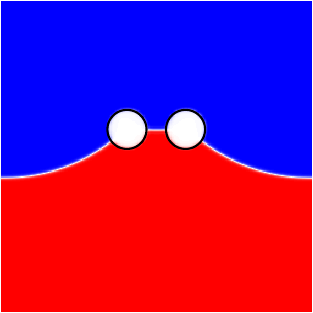}
		\end{minipage}
		\begin{minipage}{0.32\linewidth}
			\centering
			\includegraphics[width=1.5in]{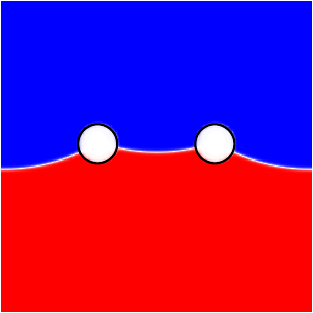}
		\end{minipage}
		\begin{minipage}{0.32\linewidth}
			\centering
			\includegraphics[width=1.5in]{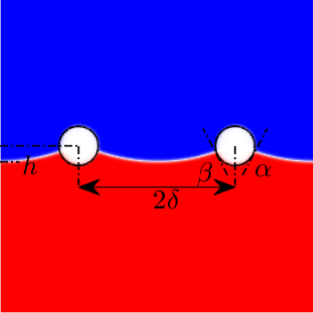}
	\end{minipage}}
	\subfigure[$\rho_{p,1}=\rho_{p,2}=1.5$]{
		\begin{minipage}{0.32\linewidth}
			\centering
			\includegraphics[width=1.5in]{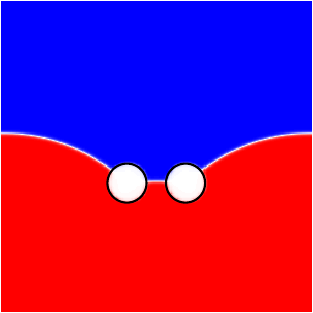}
		\end{minipage}
		\begin{minipage}{0.32\linewidth}
			\centering
			\includegraphics[width=1.5in]{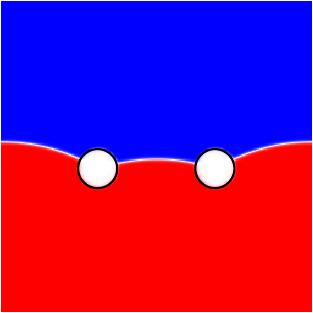}
		\end{minipage}
		\begin{minipage}{0.32\linewidth}
			\centering
			\includegraphics[width=1.5in]{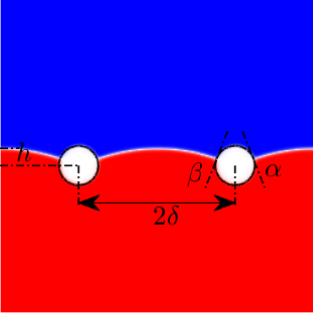}
	\end{minipage}}
	\subfigure[$\rho_{p,1}=1.5$, $\rho_{p,2}=0.5$]{
		\begin{minipage}{0.32\linewidth}
			\centering
			\includegraphics[width=1.5in]{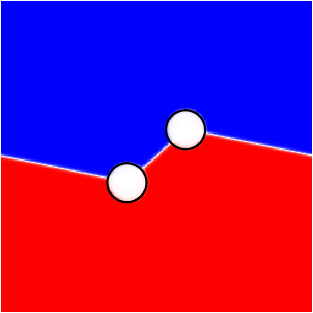}
		\end{minipage}
		\begin{minipage}{0.32\linewidth}
			\centering
			\includegraphics[width=1.5in]{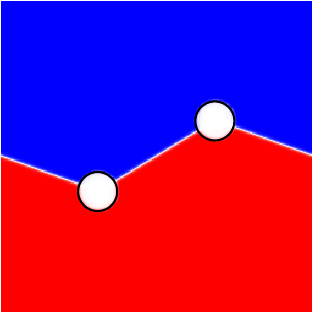}
		\end{minipage}
		\begin{minipage}{0.32\linewidth}
			\centering
			\includegraphics[width=1.5in]{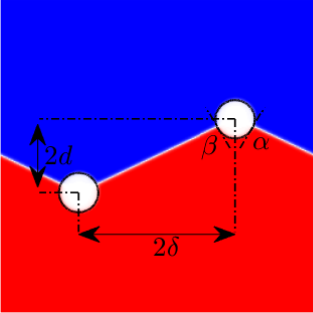}
	\end{minipage}}
	\caption{The mechanical equilibrium states of the two particles trapped at fluid-fluid interface in presence of gravity with distance $\delta=24\Delta x,48\Delta x,64\Delta x$ (from left to right columns).}
	\label{fig-2particle}
\end{figure}
\begin{figure}
	\centering
	\subfigure[]{
		\begin{minipage}{0.48\linewidth}
			\centering
			\includegraphics[width=3.0in]{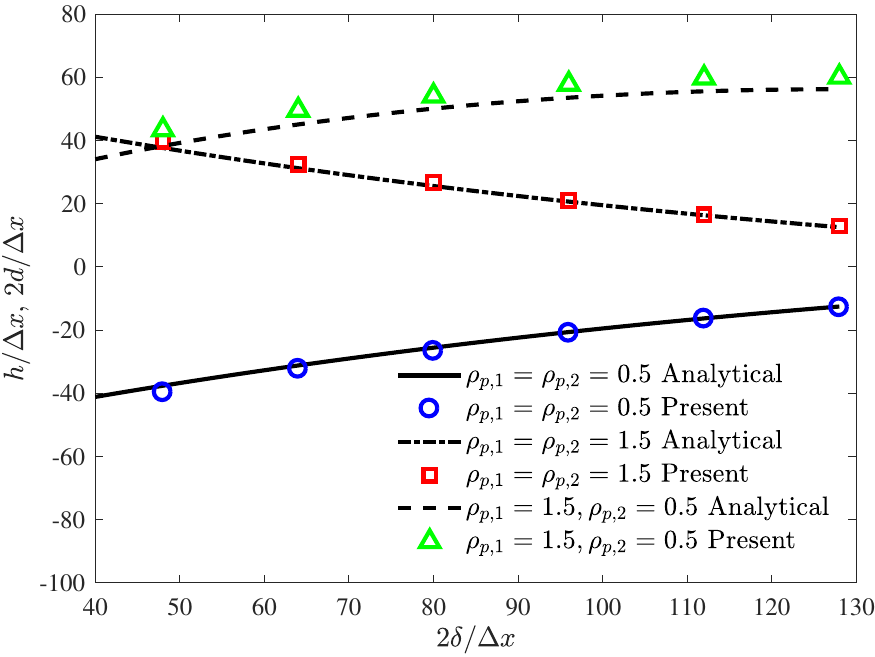}
	\end{minipage}}
	\subfigure[]{
		\begin{minipage}{0.48\linewidth}
			\centering
			\includegraphics[width=3.0in]{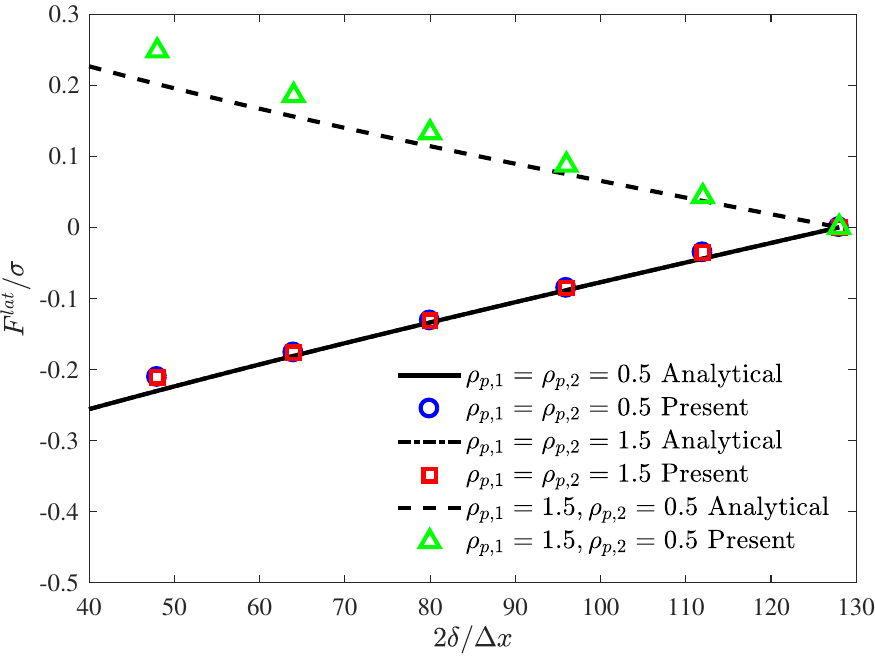}
	\end{minipage}}
	\caption{Comparisons of numerical predictions and the theoretical solutions for two particles trapped at the fluid-fluid interface under gravity [(a) the downward distance $h$ from the interface to the center of the particles with the same density and the vertical distance $2d$ between two particles, (b) the repulsive or attractive force $F^{lat}$ between two particles in the horizontal direction].}
	\label{fig-2particleG-Com}
\end{figure}

We also consider the case of two symmetrical particles with the horizontal distance being $2\delta$. The particle radius is set as $R_p=16$, and $D=3\Delta x$ is used for these two small particles. The theoretical solution of distance from the interface to the center of the particle $h$ (or the vertical distance of the two particles $d$) can also be derived from the force balance equations in horizontal and vertical directions (for more details, see Ref. \cite{Mino2020PRE}),
\begin{subequations}
	\begin{equation}
		F^{lat}=\sigma\left(\cos\alpha-\cos\beta\right)+R_p\left[-\sin\left(\frac{\pi}{2}-\theta+\alpha\right)+\sin\left(\frac{\pi}{2}-\theta+\beta\right)\right]\Delta p,
	\end{equation}
	\begin{equation}
		\sigma\left(\sin\alpha+\sin\beta\right)+R_p\left[\cos\left(\frac{\pi}{2}-\theta+\alpha\right)+\cos\left(\frac{\pi}{2}-\theta+\beta\right)\right]-\pi\left(\rho_p-\rho_f\right)R_p^2|\mathbf{g}|=0,
	\end{equation}
\end{subequations}
where $\alpha$ is the slope angle of the wide-gap fluid-fluid interface at the point of contact between the interface and the particle surface, and $\beta$ is that of the narrow-gap fluid-fluid interface (see the right column in Fig. \ref{fig-2particle}). $F^{lat}>0$ and $F^{lat}<0$ represent repulsive and attractive forces between two particles in the horizontal direction, respectively. Here it should be noted that through imposing the force $F^{lat}$ in the horizontal direction, the two particles are only allowed to move in the vertical direction. Once they are allowed to move in all directions, only the case of $2\delta=L/2$ can reach the steady state without particles touching together. 

We present some results under different densities and initially horizontal distances of the two particles in Fig. \ref{fig-2particle}, and find that when $\rho_{p,1}=\rho_{p,2}$, the value of $h$ decreases with the increase of $\delta$, while for other cases, an opposite trend is observed. We also conduct some quantitative comparisons of distances ($h$ and $d$) and force ($F^{lat}$) in Fig. \ref{fig-2particleG-Com}, and the numerical results agree well with the theoretical solutions. These validations indicate that the present diffuse-interface model can accurately describe the gas-liquid-solid multiphase flows, and in the following, it will be applied to study more complex multiphase flows.

\subsection{Mixing of two immiscible fluids and particles in an annulus with an inner rotating cylinder}\label{sec-mix}
Rotary drums are widely used in a range of chemical engineering industries to deal with liquid-solid mixing \cite{Tang2022PT}. In this part, we consider a simple case of gas-liquid-particle mixing in an annular gap with an inner rotating cylinder. As illustrated in Fig. \ref{fig-rotationInit}, two immiscible fluids and 15 particles are initialized in an annulus with the outer radius $R_{out}=2.56$ and inner radius $R_{in}=0.64$, and the inner cylinder rotates with a rotating speed $\omega_{in}$ to mix them. To study this problem, a larger square region $[-2.56,2.56]\times[-2.56,2.56]$ is used as the computational domain, and no-flux boundary condition is imposed on all boundaries. To depict the outer neutral solid region and the inner neutral cylinder of the annulus, two additional phase-field variables ($\phi_{0,out}$ and $\phi_{0,in}$) are introduced, and the initialization of all the phase-field variables are given by
\begin{equation}
	\begin{aligned}
		&\phi_{0,out}\left(x,y\right)=\frac{1}{2}+\frac{1}{2}\tanh\frac{\sqrt{x^2+y^2}-R_{out}}{D/2},\\
		&\phi_{0,in}\left(x,y\right)=\frac{1}{2}+\frac{1}{2}\tanh\frac{R_{in}-\sqrt{x^2+y^2}}{D/2},\\
		&\phi_{0,k}\left(x,y\right)=\frac{1}{2}+\frac{1}{2}\tanh\frac{R_{p,k}-\sqrt{\left(x-x_{p,k}\right)^2+\left(y-y_{p,k}\right)^2}}{D/2},\\
		&\phi\left(x,y\right)=\left[1-\phi_{0,out}\left(x,y\right)-\phi_{0,in}\left(x,y\right)-\sum_{k=1}^{15}\phi_{0,k}\left(x,y\right)\right]\tanh\frac{-2y}{D},
	\end{aligned}
\end{equation} 
where $R_{p,k}$ and $\mathbf{x}_{p,k}=\left(x_{p,k},y_{p,k}\right)$ are the radius and initial position of the $k$-th particle, respectively. In our simulations, the gravity with $|\mathbf{g}|=0.01$ is imposed on particles, and the material properties of particles are randomly determined in the given regions, i.e., $R_{p,k}$ is selected between 0.2 and 0.3, the density of $k$-th particle is given between 0.95 and 1.05, and the contact angle $\theta_{k}$ of $k$-th particle is set between $45^\circ$ and $135^\circ$. Additionally, the physical parameters of two fluids are given as $\rho_1=\rho_2=1$, $\mu_1=\mu_2=0.1$, and $\sigma_{12}=0.001$. 

\begin{figure}
	\centering
	\includegraphics[width=1.5in]{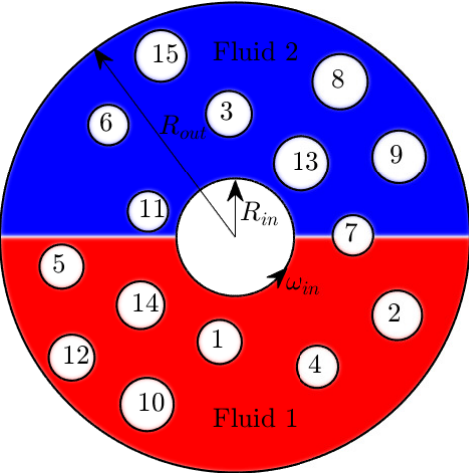}
	\caption{An annulus with the inner rotating cylinder.}
	\label{fig-rotationInit}
\end{figure}

In addition, to avoid the overlap between particles or between the particles and walls, the following short-range repulsive forces are considered \cite{Wan2006IJNMF},
\begin{subequations}
	\begin{equation}
		\mathbf{F}_{k,j}^{pp}=\begin{cases}
			0, &d_{kj}>R_{p,k}+R_{p,j}+\xi,\\
			\frac{1}{\epsilon}\left(\mathbf{x}_{p,k}-\mathbf{x}_{p,j}\right)\left(R_{p,k}+R_{p,j}+\xi-d_{kj}\right)^2, &R_{p,k}+R_{p,j}\leq d_{kj}\leq R_{p,k}+R_{p,j}+\xi,\\
			\frac{1}{\epsilon}\left(\mathbf{x}_{p,k}-\mathbf{x}_{p,j}\right)\left(R_{p,k}+R_{p,j}-d_{kj}\right), &d_{kj}\leq R_{p,k}+R_{p,j},\\
		\end{cases}
	\end{equation} 
	\begin{equation}
		\mathbf{F}_{k}^{pw}=\begin{cases}
			0, &d'_{k}>2R_{p,k}+\xi,\\
			\frac{1}{\epsilon/2}\left(\mathbf{x}_{p,k}-\mathbf{x}'_{p,k}\right)\left(2R_{p,k}+\xi-d'_{k}\right)^2, &2R_{p,k}\leq d'_{k}\leq 2R_{p,k}+\xi,\\
			\frac{1}{\epsilon/2}\left(\mathbf{x}_{p,k}-\mathbf{x}'_{p,k}\right)\left(2R_{p,k}-d'_{k}\right), &d'_{k}\leq 2R_{p,k},\\
		\end{cases}
	\end{equation}
\end{subequations}
where $\mathbf{x}'_{p,k}$ is the position of the nearest imaginary particle on the boundary wall, $d_{kj}=|\mathbf{x}_{p,k}-\mathbf{x}_{p,j}|$ and $d'_{k}=|\mathbf{x}_{p,k}-\mathbf{x}'_{p,k}|$. $\xi$ is the threshold of the repulsion forces, and $\epsilon$ is the stiff parameter of the scheme. 
 
We perform some simulations with $\Delta x=0.01$, $\Delta t=0.001$, $M=0.01$, $D=5\Delta x$, $\xi=D/2$, $\epsilon=\Delta x/2$, and show the mixing processes at some specific moments in Fig. \ref{fig-rotation} where $\omega_{in}=0.1$. From this figure, one can observe that the two layered fluids mix into each other as time goes on, and the particles also move in the annular gap under the effects of rotating speed, gravity, surface tension and repulsive forces. Additionally, the trajectories of the particles in some specific time intervals are plotted in Fig. \ref{fig-rotationXY}. In this figure, the overall movement trends of particles are rotating in time, but there are many special changes at some moments. For instance, in Fig. \ref{fig-rotationXY}(a), the moving direction of the 6-th particle is changed by the repulsive force of the outer wall, the direction of the 3-rd particle is turned by the repulsive forces between particles, and the 14-th particle becomes close to the 11-th particle because of the low-pressure wake created by the 11-th particle.  

To quantitatively describe the mixing degree, we calculate the information entropy of phase-field variable by the following formula,
\begin{equation}
	H\left(\phi\right)=-\sum_np\left(\phi_n\right)\log_2p\left(\phi_n\right),
\end{equation} 
where $p\left(\phi_n\right)$ is the probability of the $n$-th value range of $\phi$. As shown in Fig. \ref{fig-rotationH}, the value of $H\left(\phi\right)$ increases with the mixing of fluids and particles at first, and then oscillates near a certain value when the rotating speed reaches a quasi-steady state. 
One can also find from this figure that the value of $H\left(\phi\right)$ also increases with the increase of the rotating speed due to a larger driving force.

\begin{figure}
	\centering
	\subfigure[$t=20$]{
		\begin{minipage}{0.24\linewidth}
			\centering
			\includegraphics[width=1.5in]{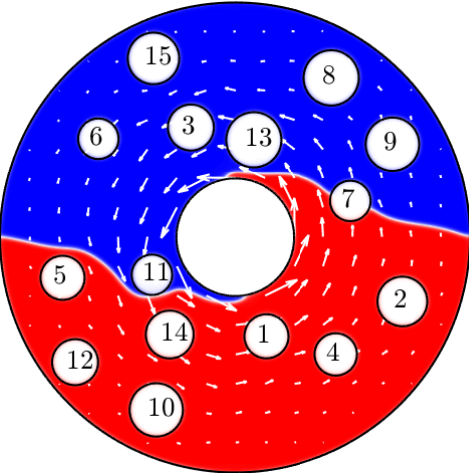}
		\end{minipage}}
	\subfigure[$t=40$]{
		\begin{minipage}{0.24\linewidth}
			\centering
			\includegraphics[width=1.5in]{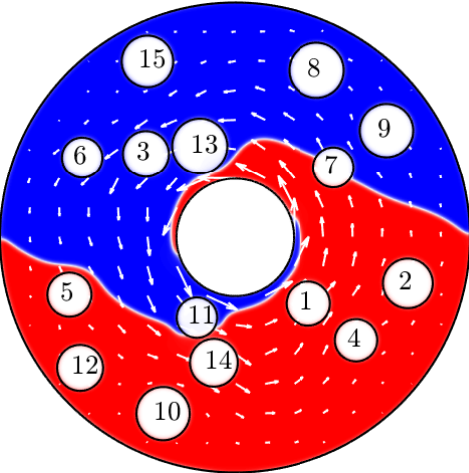}
		\end{minipage}}
	\subfigure[$t=60$]{
		\begin{minipage}{0.24\linewidth}
			\centering
			\includegraphics[width=1.5in]{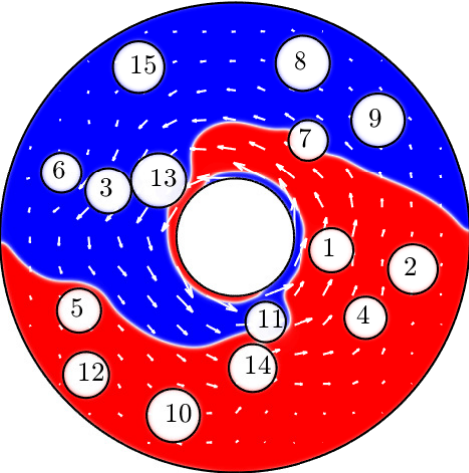}
		\end{minipage}}
	\subfigure[$t=80$]{
		\begin{minipage}{0.24\linewidth}
			\centering
			\includegraphics[width=1.5in]{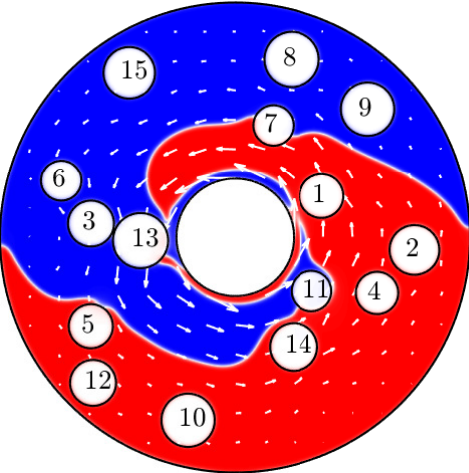}
	\end{minipage}}

	\subfigure[$t=100$]{
		\begin{minipage}{0.24\linewidth}
			\centering
			\includegraphics[width=1.5in]{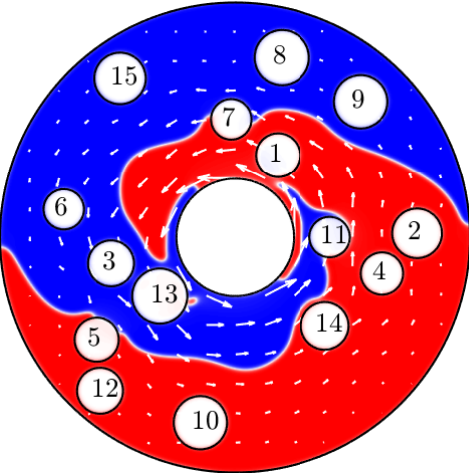}
	\end{minipage}}
	\subfigure[$t=200$]{
		\begin{minipage}{0.24\linewidth}
			\centering
			\includegraphics[width=1.5in]{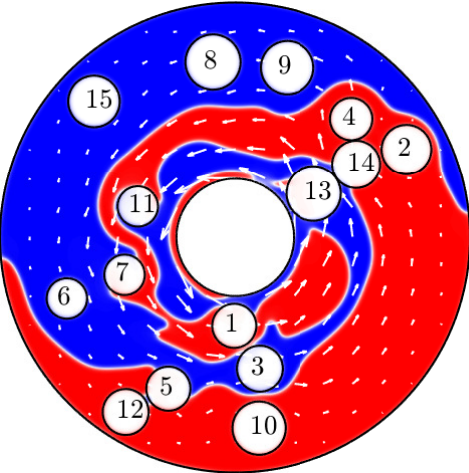}
	\end{minipage}}
	\subfigure[$t=300$]{
		\begin{minipage}{0.24\linewidth}
			\centering
			\includegraphics[width=1.5in]{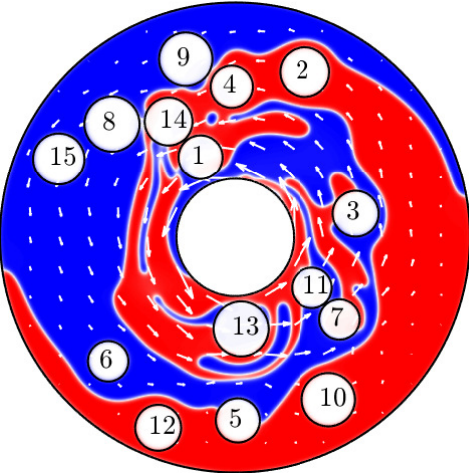}
	\end{minipage}}
	\subfigure[$t=400$]{
		\begin{minipage}{0.24\linewidth}
			\centering
			\includegraphics[width=1.5in]{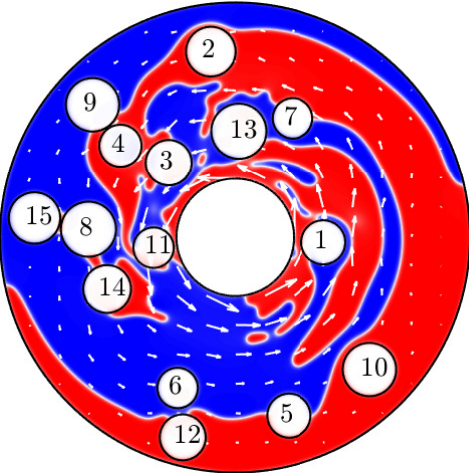}
	\end{minipage}}

	\subfigure[$t=500$]{
		\begin{minipage}{0.24\linewidth}
			\centering
			\includegraphics[width=1.5in]{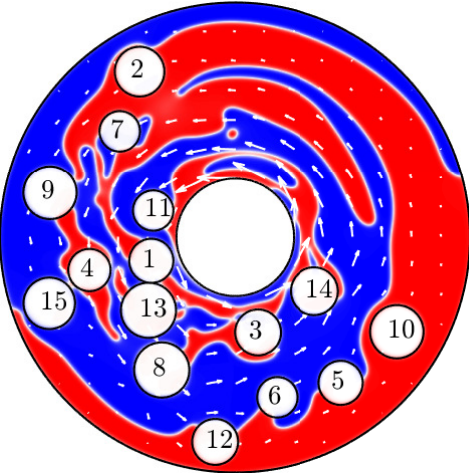}
	\end{minipage}}
	\subfigure[$t=1000$]{
		\begin{minipage}{0.24\linewidth}
			\centering
			\includegraphics[width=1.5in]{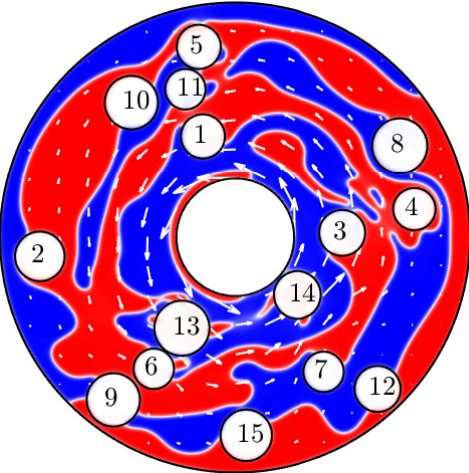}
	\end{minipage}}
	\subfigure[$t=1500$]{
		\begin{minipage}{0.24\linewidth}
			\centering
			\includegraphics[width=1.5in]{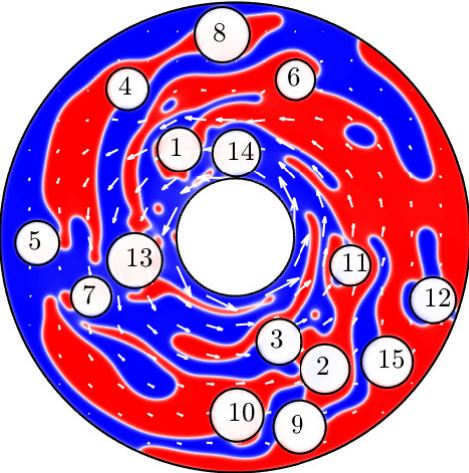}
	\end{minipage}}
	\subfigure[$t=2000$]{
		\begin{minipage}{0.24\linewidth}
			\centering
			\includegraphics[width=1.5in]{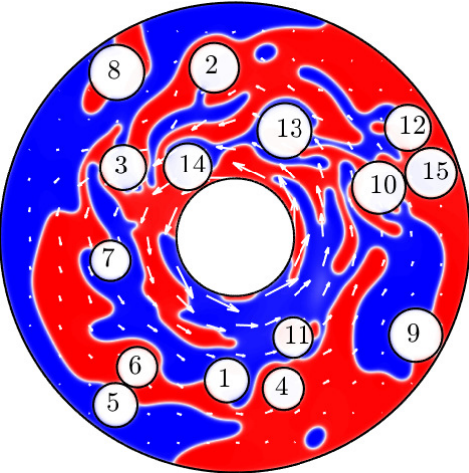}
	\end{minipage}}
	\caption{The mixing of the fluids and particles in an annulus with inner rotating cylinder ($\omega_{in}=0.1$).}
	\label{fig-rotation}
\end{figure}
\begin{figure}
	\centering
	\subfigure[]{
		\begin{minipage}{0.48\linewidth}
			\centering
			\includegraphics[width=2.5in]{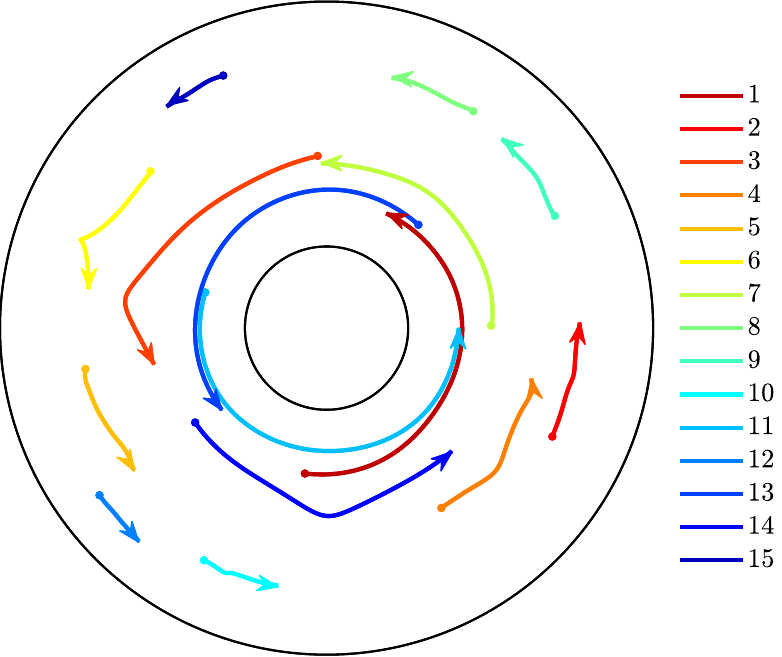}
	\end{minipage}}
	\subfigure[]{
		\begin{minipage}{0.48\linewidth}
			\centering
			\includegraphics[width=2.5in]{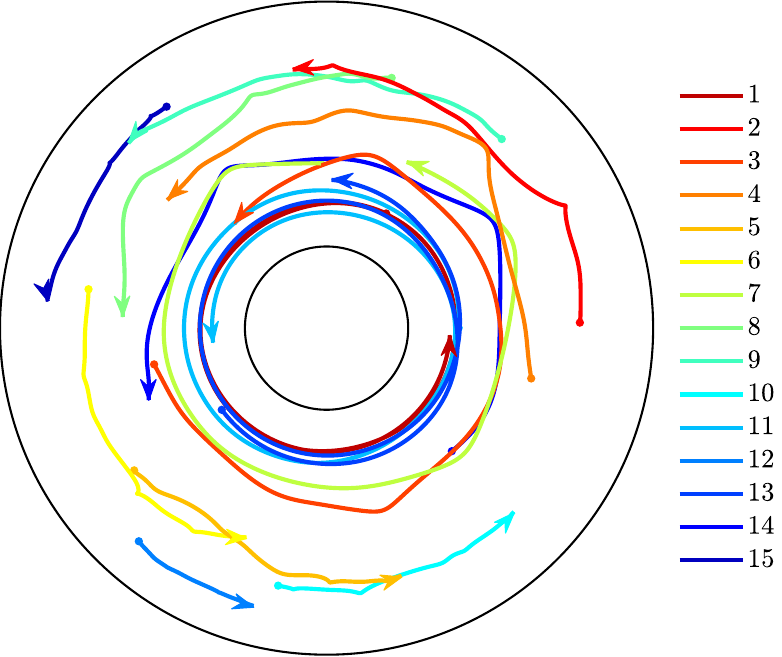}
	\end{minipage}}

	\subfigure[]{
		\begin{minipage}{0.48\linewidth}
			\centering
			\includegraphics[width=2.5in]{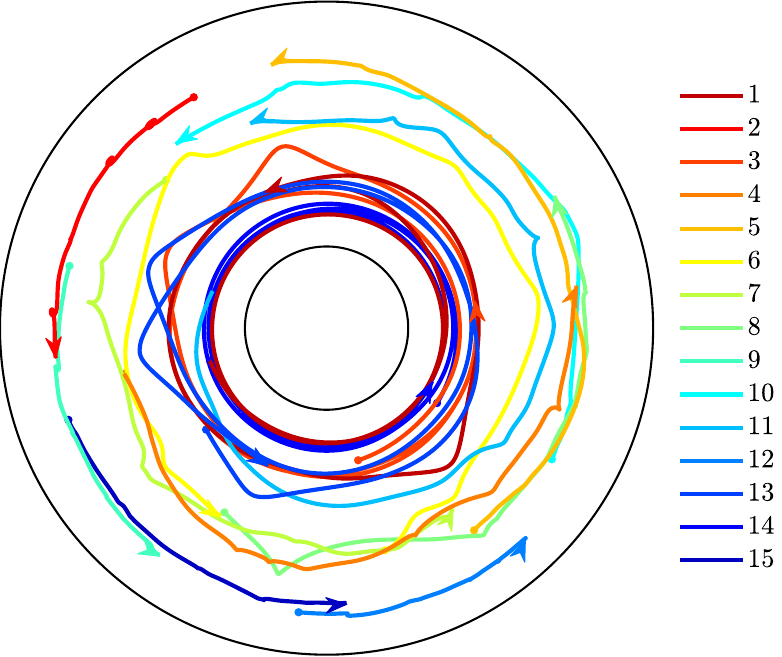}
	\end{minipage}}
	\subfigure[]{
		\begin{minipage}{0.48\linewidth}
			\centering
			\includegraphics[width=2.5in]{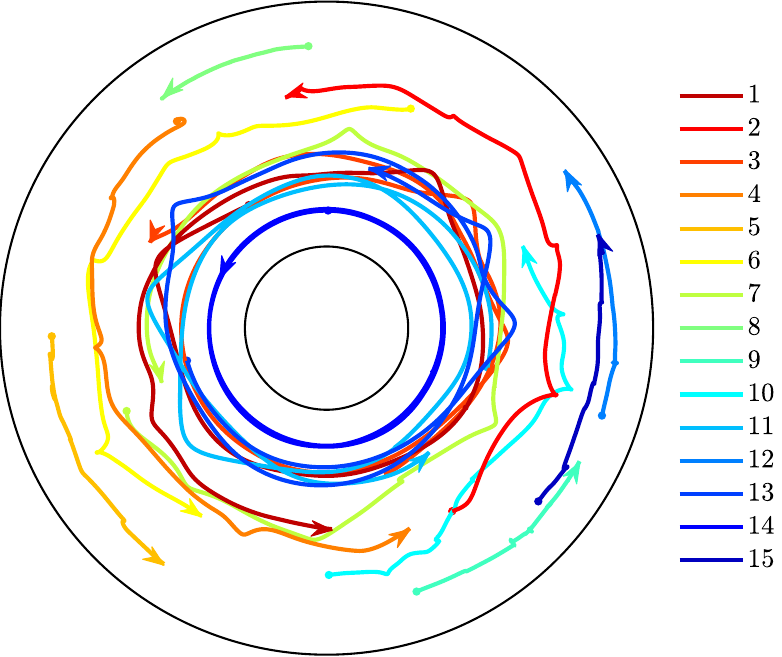}
	\end{minipage}}
	\caption{The trajectories of the particles in some specific time intervals [(a) $0\leq t\leq100$, (b) $100\leq t\leq400$, (c) $500\leq t\leq1000$, and (d) $1500\leq t\leq2000$].}
	\label{fig-rotationXY}
\end{figure}
\begin{figure}
	\centering
	\includegraphics[width=3.5in]{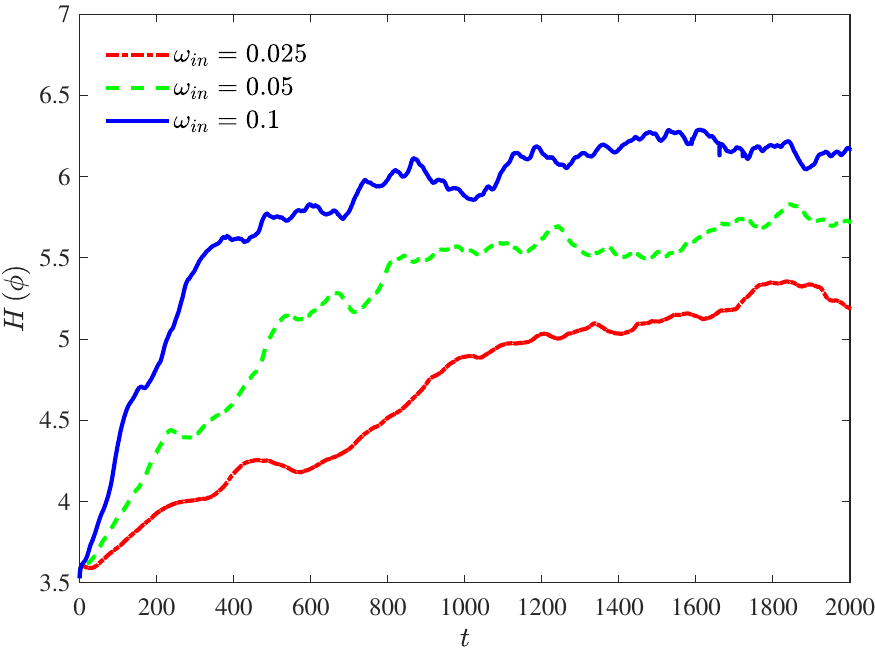}
	\caption{The evolution of information entropy of the phase-field variable $\phi$.}
	\label{fig-rotationH}
\end{figure}

\subsection{Displacement of immiscible fluids and particles in a complex channel}
\begin{figure}
	\centering
	\includegraphics[width=4.0in]{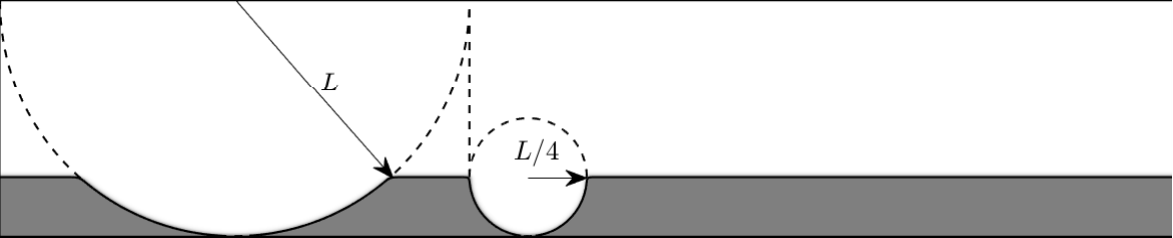}
	\caption{Schematic of the complex channel, the gray region represents the solid wall.}
	\label{fig-cavityInit}
\end{figure}
The displacement of immiscible fluids is an important issue for industry applications, such as the geological sequestration of carbon dioxide and oil recovery \cite{Alvarado2010,Fan2019Energy}. In recent years, the simplified problem of displacement in a cavity has been widely studied \cite{Zhou2010CES,Liang2015IJMPC,Lou2017IJMPC}, but the effect of free solid objects on the displacement has not been considered in these previous works, which may also play an important role in the above-mentioned applications. Here we assume the free solid objects to be the circular particles, and investigate the immiscible displacement in a cavity structure. The schematic of the problem is shown in Fig. \ref{fig-cavityInit} where the width and length of the computational domain are $L=2.56$ and $5L$, respectively. The periodic boundary condition is applied in the horizontal direction, while the no-flux boundary condition is imposed on the neutral bottom and top walls.  

Initially, 16 particles are placed in the channel with random radius ($0.16\leq R_{p,k}\leq0.2$), density ($0.95\leq\rho_{p,k}\leq1.05$) as well as contact angle ($80^\circ\leq\theta_{p,k}\leq100^\circ$), and most of them are placed in the left cavity. In addition, a droplet of the displaced fluid 1 with $\rho_1=1$ and $\mu_1=0.1$ is located in the right cavity, and other parts of the domain are filled by the driving fluid 2 with $\rho_2=1$ and $\mu_2=0.01$. To achieve the above settings, the phase-field variables are initialized by
\begin{equation}
	\begin{aligned}
		&\phi_{0,bot}\left(x,y\right)=\left(\frac{1}{2}+\frac{1}{2}\tanh\frac{L/4-y}{D/2}\right)\left[\frac{1}{2}+\frac{1}{2}\tanh\frac{\sqrt{\left(x-L\right)^2+\left(y-L\right)^2}-L}{D/2}\right]\left[\frac{1}{2}+\frac{1}{2}\tanh\frac{\sqrt{\left(x-5L/4\right)^2+\left(y-L/4\right)^2}-L/4}{D/2}\right],\\
		&\phi_{0,k}\left(x,y\right)=\frac{1}{2}+\frac{1}{2}\tanh\frac{R_{p,k}-\sqrt{\left(x-x_{p,k}\right)^2+\left(y-y_{p,k}\right)^2}}{D/2},\\
		&\phi\left(x,y\right)=\left[1-\phi_{0,bot}\left(x,y\right)-\sum_{k=1}^{16}\phi_{0,k}\left(x,y\right)\right]\tanh\frac{L/4-\sqrt{\left(x-5L/4\right)^2+\left(y-L/4\right)^2}}{D/2}.
	\end{aligned}
\end{equation}
To drive the fluids and particles, a body force $\mathbf{F}_b=\left(0.001,0\right)$ along the positive $x$-direction is imposed on the fluids, and the effect of the gravity is neglected in this case. In the following simulations, other parameters are set the same as those in Section \ref{sec-mix}. 

We consider three different cases of the displacement problem, case A: single fluid and particles, case B: only immiscible fluids, and case C: immiscible fluids and particles. As shown in Fig. \ref{fig-cavity-com}, in the early stage of displacement when the particles do not contact with the fluid 1, the particles show the same trajectories for cases A and C, while the fluid 1 shows a slower displacement for case C due to the hindrance of the particles in the upstream of the droplet, compared to case B. On the other hand, when the particles interact with the fluid 1 (see case C), it seems that more fluid 1 is displaced from the cavity, and some particles are trapped by viscous force and surface tension, and thus move slower in the channel.   

To give a quantitative comparison of the displacement efficiency, we define $De$ to be the ratio of the volume of the fluid 1 out the right cavity ($x<2L$ and $x>5L/2$) to the total volume of fluid 1, and plot the results in Fig. \ref{fig-cavity-Q}(a). From this figure, one can find that when $t<250$, the presence of particles in the upstream reduces the fluid velocity, which leads to a lower displacement efficiency. However, when some particles catch up with and exceed the fluid 1, they promote some fluid 1 out of the cavity because of the effects of FSI and wettability, and thus the value of $De$ becomes larger than that of the case without particles. In addition, the horizontal positions of some typical particles of cases A and C are shown in Fig. \ref{fig-cavity-Q}(b). In this figure, the 15-th particle shows the same position for two cases since it does not interact with fluid 1. However, the trajectories of the 12-th and 16-th particles have some differences in the later stage: the 16-th particle is trapped by the viscous resistance of fluid 1, while the 12-th particle moves slowly due to a small driving velocity caused by the obstruction of particles in the downstream. These results illustrate that the presence of the particles can improve the displacement efficiency to a certain extent.     

\begin{figure}
	\centering
	\subfigure[$t=0$]{
		\begin{minipage}{0.33\linewidth}
			\centering
			\includegraphics[width=2.0in]{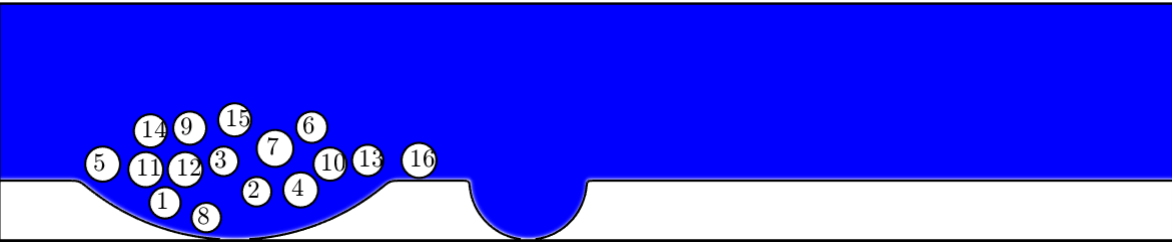}
		\end{minipage}
		\begin{minipage}{0.33\linewidth}
			\centering
			\includegraphics[width=2.0in]{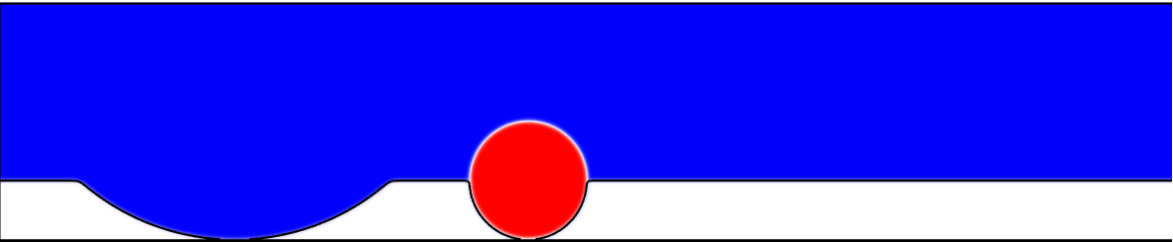}
		\end{minipage}
		\begin{minipage}{0.33\linewidth}
			\centering
			\includegraphics[width=2.0in]{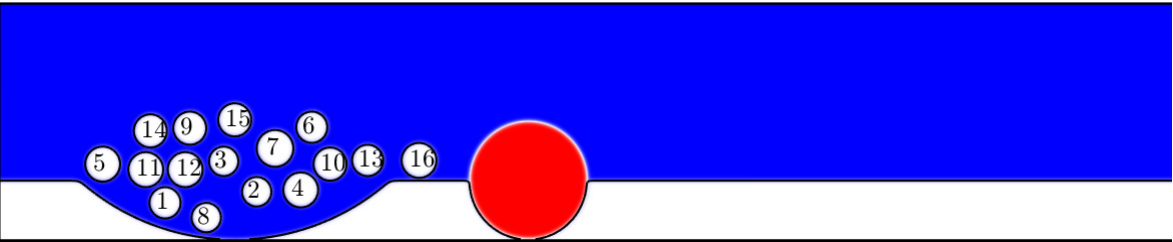}
		\end{minipage}}
	
	\subfigure[$t=50$]{
		\begin{minipage}{0.33\linewidth}
			\centering
			\includegraphics[width=2.0in]{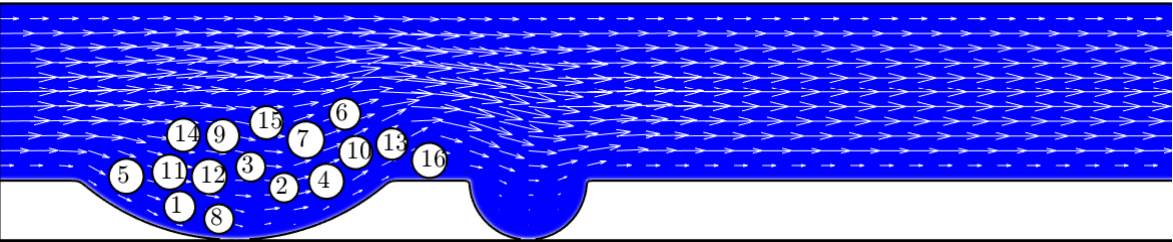}
		\end{minipage}
		\begin{minipage}{0.33\linewidth}
			\centering
			\includegraphics[width=2.0in]{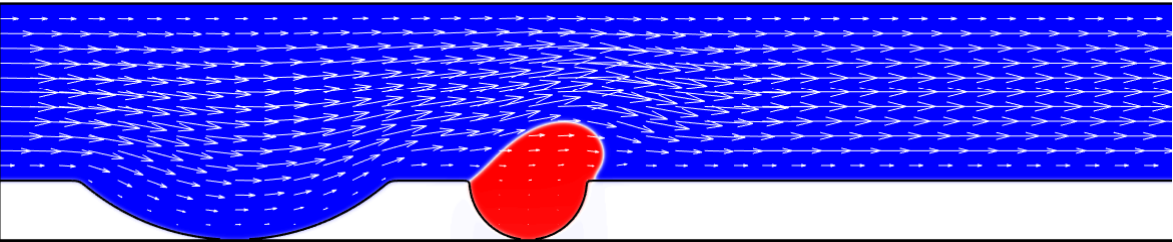}
		\end{minipage}
		\begin{minipage}{0.33\linewidth}
			\centering
			\includegraphics[width=2.0in]{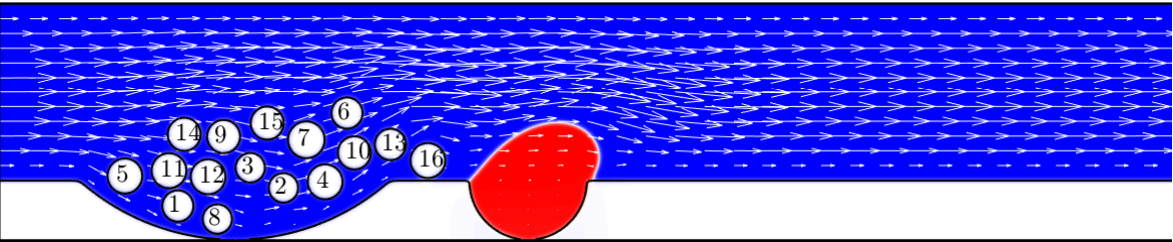}
	\end{minipage}}
	
	\subfigure[$t=100$]{
		\begin{minipage}{0.33\linewidth}
			\centering
			\includegraphics[width=2.0in]{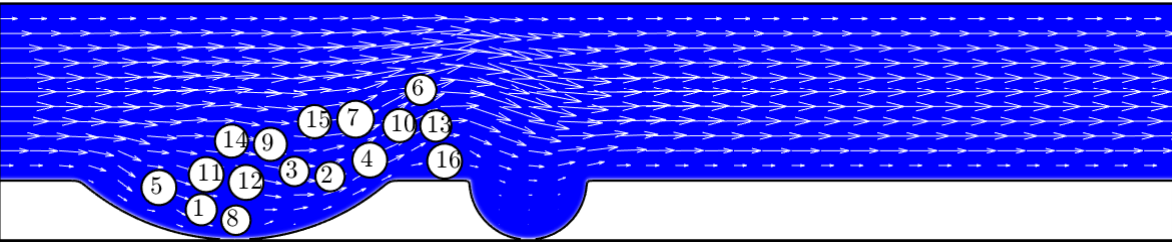}
		\end{minipage}
		\begin{minipage}{0.33\linewidth}
			\centering
			\includegraphics[width=2.0in]{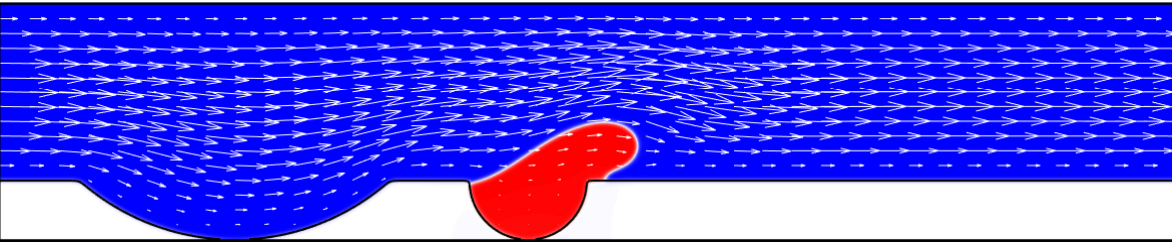}
		\end{minipage}
		\begin{minipage}{0.33\linewidth}
			\centering
			\includegraphics[width=2.0in]{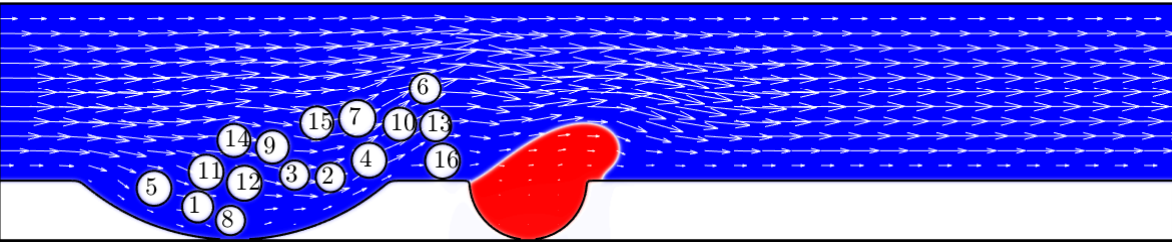}
	\end{minipage}}

	\subfigure[$t=200$]{
		\begin{minipage}{0.33\linewidth}
			\centering
			\includegraphics[width=2.0in]{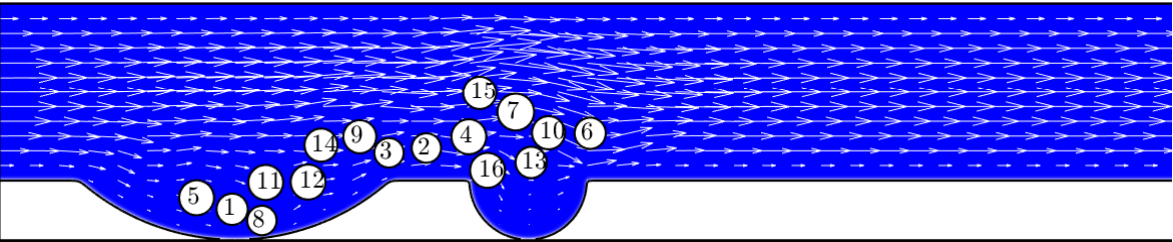}
		\end{minipage}
		\begin{minipage}{0.33\linewidth}
			\centering
			\includegraphics[width=2.0in]{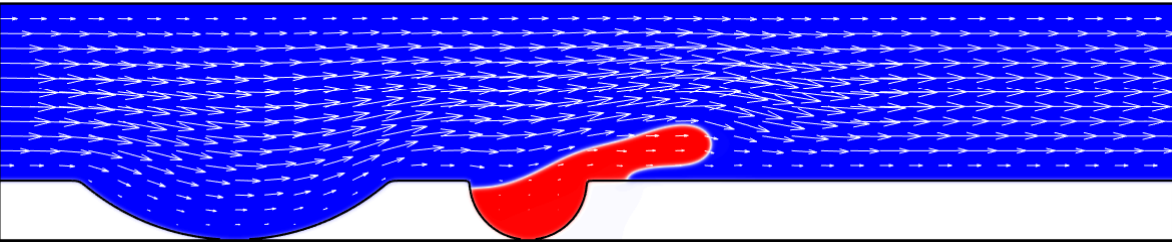}
		\end{minipage}
		\begin{minipage}{0.33\linewidth}
			\centering
			\includegraphics[width=2.0in]{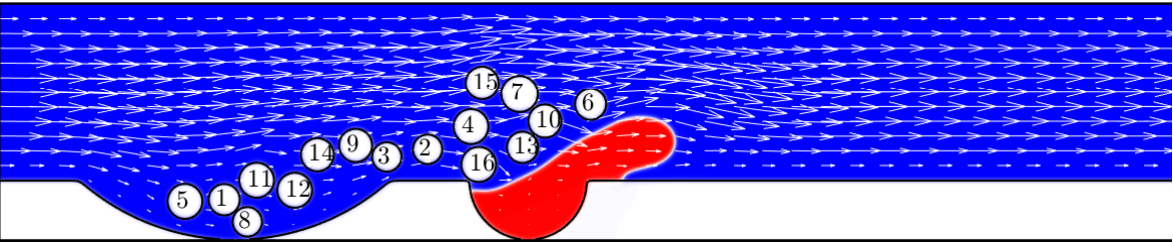}
	\end{minipage}}

	\subfigure[$t=300$]{
		\begin{minipage}{0.33\linewidth}
			\centering
			\includegraphics[width=2.0in]{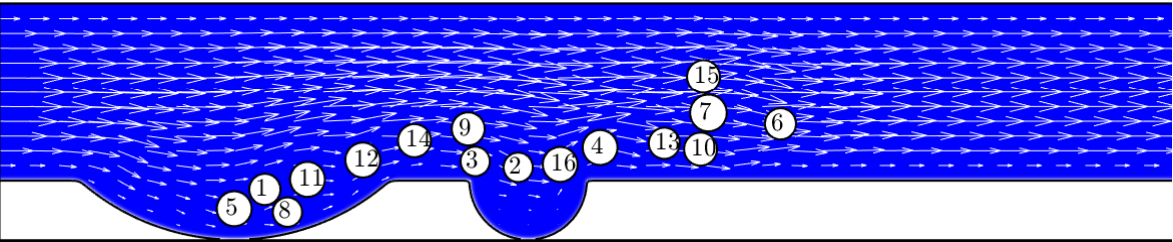}
		\end{minipage}
		\begin{minipage}{0.33\linewidth}
			\centering
			\includegraphics[width=2.0in]{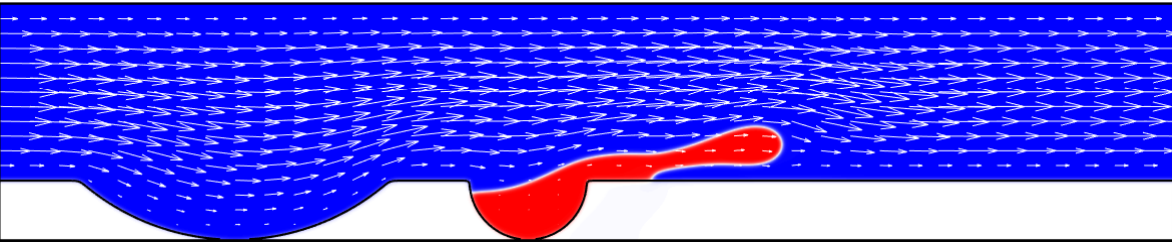}
		\end{minipage}
		\begin{minipage}{0.33\linewidth}
			\centering
			\includegraphics[width=2.0in]{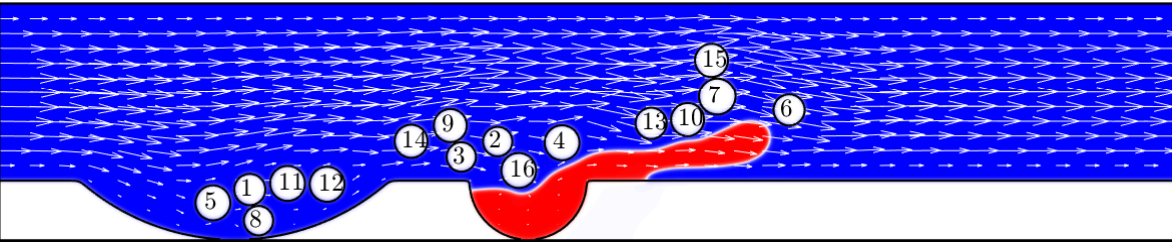}
	\end{minipage}}

	\subfigure[$t=450$]{
		\begin{minipage}{0.33\linewidth}
			\centering
			\includegraphics[width=2.0in]{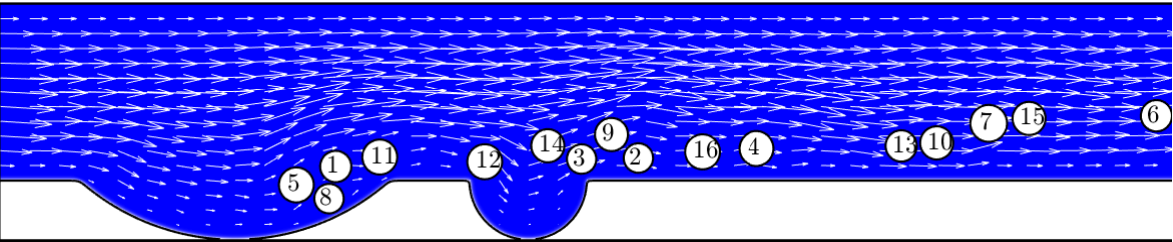}
		\end{minipage}
		\begin{minipage}{0.33\linewidth}
			\centering
			\includegraphics[width=2.0in]{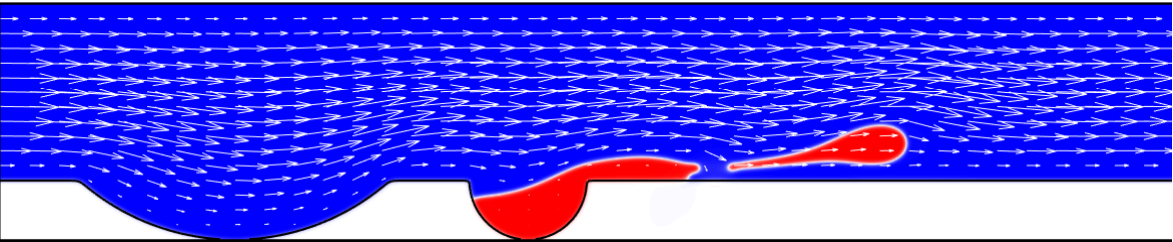}
		\end{minipage}
		\begin{minipage}{0.33\linewidth}
			\centering
			\includegraphics[width=2.0in]{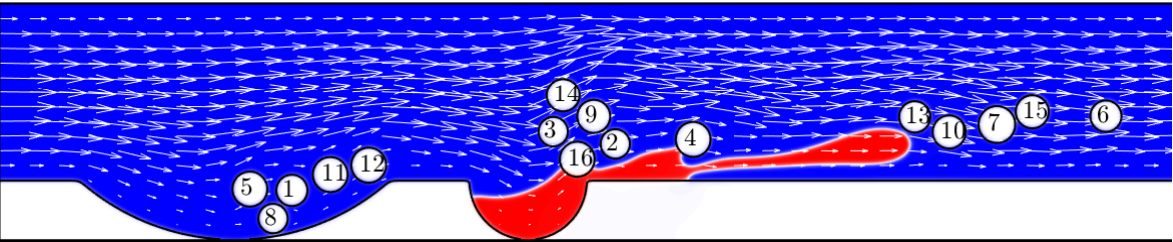}
	\end{minipage}}
	\caption{The dynamics of displacement in a complex channel (cases A, B and C from left to right columns).}
	\label{fig-cavity-com}
\end{figure}
\begin{figure}
	\centering
	\subfigure[]{
		\begin{minipage}{0.48\linewidth}
			\centering
			\includegraphics[width=3.0in]{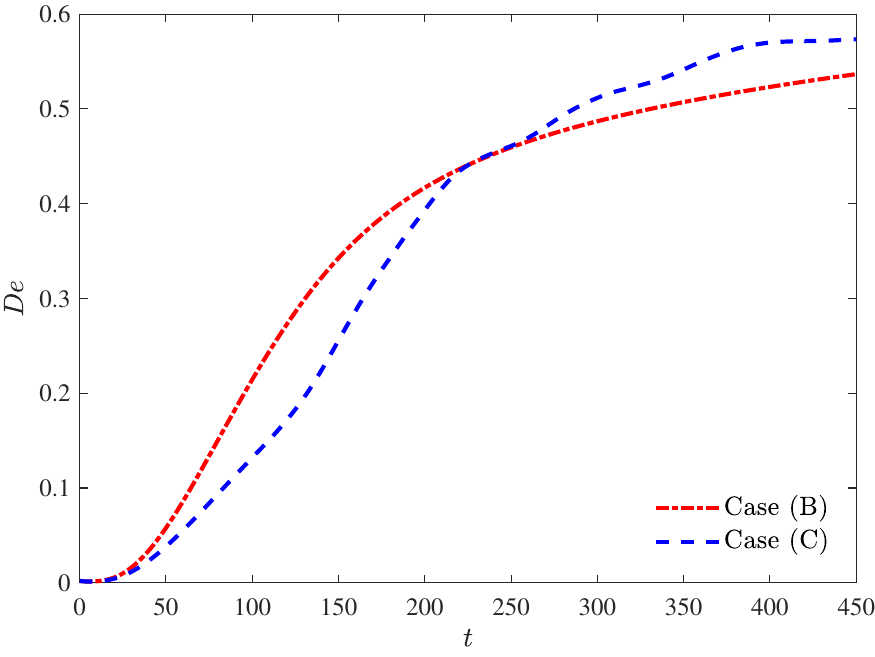}
	\end{minipage}}
	\subfigure[]{
		\begin{minipage}{0.48\linewidth}
			\centering
			\includegraphics[width=3.0in]{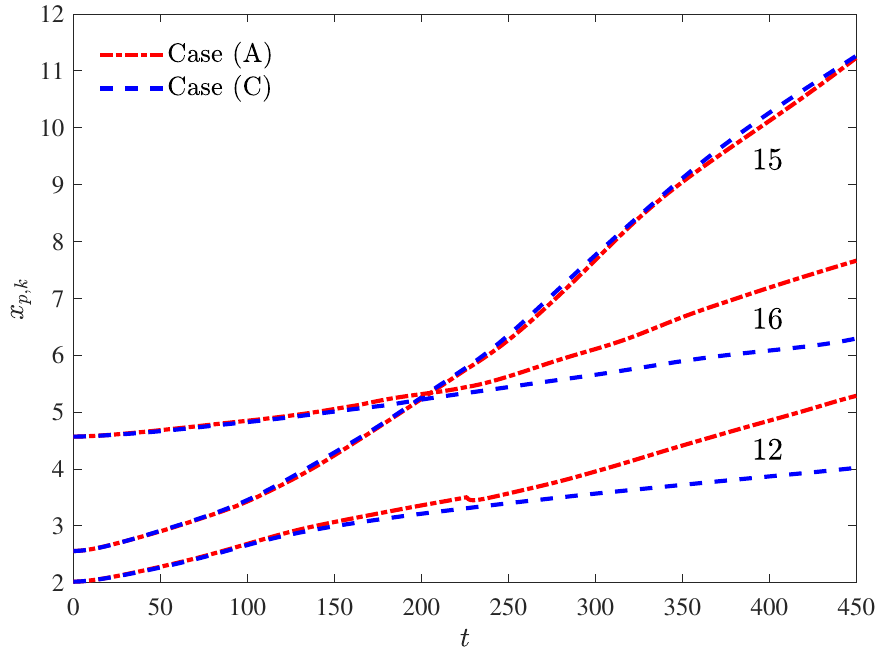}
	\end{minipage}}
	\caption{Time histories of some quantities [(a) displacement efficiency, (b) horizontal positions of some particles].}
	\label{fig-cavity-Q}
\end{figure}

\section{Conclusions}\label{Conclusions}
In this paper, a thermodynamically consistent and conservative diffuse-interface model for gas-liquid-solid multiphase flows is proposed based on a ternary phase-field model. In the present diffuse-interface model, a novel free energy including the original part for two-phase flows, the penalty term in solid phase as well as the term reflecting the wettability on the solid surface is also developed, and under some reasonable approximations, we can show that the last term is consistent with the wall energy in the classic phase-field model for two-phase flows. In addition, a consistent and conservative NS equations is also constructed for the fluid flows, in which the volume fraction of solid phase is introduced to depict the FSI and the high viscosity in solid phase. Furthermore, the total energy of the present phase-field-NS system is also proved to be dissipative for the two-phase flows. We note that the present diffuse-interface model not only can be used to describe the two-phase flows in complex geometries, but also can be applied to study the gas-liquid-particle multiphase flows by including the governing equations of particle motion. To solve the proposed diffuse-interface model, a general LB method is also developed where the scale factor $d_0$, instead of the fixed value in classical LB methods, can be adjusted to improve the numerical stability. The accuracy and the energy dissipation of the diffuse-interface model for gas-liquid-solid multiphase flows are first tested by several benchmark problems, then some multiphase flows in complex geometries with complex interfacial dynamics and FSI are further considered, and the numerical results demonstrate the good capability of the present diffuse-interface model.     

\section*{Acknowledgments}
This research was supported by the National Natural Science Foundation of China (Grants No. 12072127 and No 51836003), and the Interdiciplinary Research Program of HUST (2023JCYJ002). The computation was completed on the HPC Platform of Huazhong University of Science and Technology.
\appendix

\bibliographystyle{elsarticle-num} 
\bibliography{references}

\end{document}